\newcommand{\eps}{\varepsilon}
\newcommand{\DIP}{\textsc{DynIP}\xspace}
\newcommand{\DHD}{\textsc{DynHD}\xspace}
\newcommand{\DaHD}{\textsc{DynApproxHD}\xspace}
\newcommand{\DEM}{\textsc{DynEM}\xspace}
\newcommand{\OMv}{\textsc{OMv}\xspace}
\newcommand{\Time}{\mathcal{T}}
\renewcommand{\polylog}{\mathrm{polylog \;}}
\newtheorem{conjecture}{Conjecture}
\newtheorem{problem}{Problem}
\newtheorem{lemma}{Lemma}
\newtheorem{theorem}{Theorem}
\newtheorem{corollary}{Corollary}
\title{Upper and lower bounds for dynamic data structures on strings}
\author[1]{Raphael Clifford\thanks{\texttt{Raphael.Clifford@bristol.ac.uk}. Supported by EPSRC
fellowship EP\/J019283\/1.}}
\author[2]{Allan Gr\o nlund\thanks{\texttt{jallan@cs.au.dk}. Supported by Center for Massive Data Algorithmics, a Center of the Danish National Research Foundation, grant DNRF84.}}
\author[2]{Kasper Green Larsen\thanks{\texttt{larsen@cs.au.dk}. Supported by a Villum Young Investigator grant, an AUFF starting grant and by Center for Massive Data Algorithmics, a Center of the Danish National Research Foundation, grant DNRF84.}}
\author[3]{Tatiana Starikovskaya\thanks{tat.starikovskaya@gmail.com}}
\affil[1]{University of Bristol, Department of Computer Science, Bristol, U.K.}
\affil[2]{Aarhus University, Department of Computer Science, Aarhus, Denmark}
\affil[3]{École Normale Supérieure, Department of Computer Science, Paris, France}
\begin{document}
\maketitle

\begin{abstract}
We consider a range of simply stated dynamic data structure problems on strings. An update changes one symbol in the input and a query asks us to compute some function of the pattern of length $m$ and a substring of a longer text. We give both conditional and unconditional lower bounds for variants of exact matching with wildcards, inner product, and Hamming distance computation via a sequence of reductions. As an example, we show that there does not exist an $O(m^{1/2-\eps})$ time algorithm for a large range of these problems unless the online Boolean matrix-vector multiplication conjecture is false. We also provide nearly matching upper bounds for most of the problems we consider. 
\end{abstract}

\section{Introduction}
The search for lower bounds  provides one of the greatest challenges in computer science. Progress in finding better truly unconditional lower bounds continues in slow but steady steps. There appears however, in the short term at least, to be no realistic prospect of finding unconditional lower bounds which are polynomial in the size of the input.  One of the most exciting discoveries in recent years has been that such polynomial lower bounds can be given for a range of problems in $\P$ conditional on the hardness of a small set of well known and conjectured to be hard problems~\cite{AW:2014,AWW:2014,Bringmann:2014, BI:SETH:2015, ABW:2015,Bringmann:FOCS:2015}. These include the Strong Exponential Time Hypothesis (SETH), 3-SUM and online Boolean matrix-vector product (OMv).  

In this paper we study the hardness of a number of simply stated dynamic string problems and show both conditional lower bounds based on the OMv conjecture (see Conjecture~\ref{conj:omv_conjecture} for a precise statement) as well as unconditional lower bounds.  We will also give new upper bounds which in many cases will nearly match our new conditional lower bounds. Each problem will have the following form.

\begin{problem}
Consider a text $T$ of length $n$ and a pattern $P$ of length $m$. An update to the pattern (or text) is a pair $(j, \sigma)$ which indicates that the letter at index $j$ in the pattern (or text) is to be substituted with the letter $\sigma$. The task is to develop a dynamic data structure on $P$ and $T$ that supports the following queries: Given a position $i$ of $T$, output $f(P, T[i,\dots, i+m-1])$.   
\end{problem}

Unless stated otherwise, we allow updates to both the pattern $P$ and the text $T$. The different functions $f$ we will consider are Hamming distance (\DHD), inner product (\DIP) and exact matching with wildcards (\DEM). These functions have formed the core of pattern matching with errors and wildcards for many years and have been extensively studied in both the standard offline pattern matching setting and to a lesser extent online and streaming. To the best of our knowledge, this is the first exploration of the  complexity of pattern matching with errors and wildcards as a fully dynamic data structure problem.

By way of preparation, we give $O(\sqrt{m\log{m}})$ query and update times for exact inner product, exact matching with wildcards, and for dynamic Hamming distance over constant-sized alphabets, as well as $O(m^{3/4}\log^{1/4}{m})$-time algorithm for dynamic Hamming distance over polynomial-size alphabets. These algorithms are derived via a lazy rebuilding scheme. We then show in Theorem~\ref{thm:conditional-main} that there does not exist an $O(m^{1/2 - \epsilon})$ time solution to any of these problems unless the online Boolean matrix-vector conjecture is false. The lower bound for dynamic exact matching with wildcards is particularly interesting as it is exponentially higher than the known $O(\log{m})$ time complexity for dynamic exact matching without wildcards.   

Our conditional lower bound also extends to $(1+\eps)$-approximate \DIP, \DIP modulo $2$ and remarkably, to \DHD modulo $2$ with a ternary input alphabet. This latter result is in stark contrast to the complexity of \DHD modulo $2$ with a binary input alphabet which we show in Lemma~\ref{lem:dhdmod2} can be solved in $O(\log{m}/\log{\log{m}})$ query and update time. 

We complement all these conditional lower bounds with a set of unconditional lower bounds derived via reductions from different 2d-dynamic range counting problems. First we show that \DIP is at least as hard as weighted 2d-range counting. As a result, we get an unconditional lower bound of $\Omega((\log{m}/\log{\log {m}})^2)$  for \DIP. This matches the highest unconditional lower bound known for any dynamic data structure problem.  We then go on to show  $\Omega((\log^{1/2} m/\log{\log{m}})^3)$ unconditional lower bounds for  \DHD over binary alphabets, \DIP modulo $2$ over binary alphabets and \DHD modulo $2$ over ternary alphabets.  These lower bounds are derived from a recent breakthrough in the complexity of the unweighted version of 2d-range counting. To finish our unconditional lower bounds we then show $\Omega(\log{m}/\log{\log {m}})$ unconditional lower bounds for  \DHD modulo $2$ over binary alphabets, \DEM and $(1+\eps)$-approximate \DIP.   

As our final set of dynamic problems,  we move on to consider $(1+\eps)$-approximate \DHD for which we do not have matching conditional lower bounds, despite its superficial similarity to approximate \DIP.  Unlike for approximate \DIP and exact \DHD, in Section~\ref{sec:upper_approx_HD} we show markedly different upper bounds for approximate \DHD depending on whether updates may occur in only the pattern and text or in both.  For the former case we derive  $O(\eps^{-c}\; \polylog m)$ time algorithms via Johnson-Lindenstrauss sketching. The exact value of $c$ depends on the size of the input alphabet and in fact for some update operations the running time dependency on $\log{m}$ is completely removed. For the latter case with updates in both the pattern and text, our upper bound is $O(\eps^{-2} \sqrt{m}\;\polylog m)$ time.  It is an interesting and open question whether there exist matching conditional lower bounds for these versions of approximate \DHD as well. We give a summary of the results in Table~\ref{t:summary}.

\begin{table}
\begin{tabular}{|c|l|l|c|c|c|}
\hline
& Mode &Alphabet & Upper bounds & Cond. lower bounds& Uncond. lower bounds\\
\hline
\parbox[t]{2mm}{\rotatebox[origin=c]{90}{\DEM}}&exact&polynom.&$O(\sqrt{m \log m})$&$\Omega(m^{1/2 -\delta})$&$\Omega(\log m / \log \log m)$\\

\hline
\parbox[t]{2mm}{\multirow{3}{*}{\rotatebox[origin=c]{90}{\DIP}}}&exact&polynom.&$O(\sqrt{m \log m})$&$\Omega(m^{1/2 -\delta})$&$\Omega((\log m / \log \log m)^2)$\\
&$\bmod \; 2$ & $\{0,1\}$ &$O(\sqrt{m \log m})$&$\Omega(m^{1/2 -\delta})$&$\Omega((\log^{1/2} m / \log \log m)^3)$\\
&approx.&polynom.&$O(\sqrt{m \log m})$&$\Omega(m^{1/2 -\delta})$&$\Omega(\log m / \log \log m)$\\
\hline

\parbox[t]{2mm}{\multirow{3}{*}{\rotatebox[origin=c]{90}{\DHD}}}&\multirow{2}{*}{exact} & constant &$O(\sqrt{m \log m})$&$\Omega(m^{1/2 -\delta})$&$\Omega((\log^{1/2} m / \log \log m)^3)$\\
&& polynom. &$O(m^{3/4} \log^{1/2} m)$&$\Omega(m^{1/2 -\delta})$&$\Omega((\log^{1/2} m / \log \log m)^3)$\\
&\multirow{2}{*}{$\bmod \; 2$} &$\{0,1\}$ &$O(\log m / \log \log m)$& --- &$\Omega(\log m / \log \log m)$\\
& &$\{0,1,2\}$ &$O(\sqrt{m \log m})$&$\Omega(m^{1/2 -\delta})$&$\Omega((\log^{1/2} m / \log \log m)^3)$\\
\hline
\end{tabular}
\caption{Update/query time bounds for \DEM, \DHD, and \DIP for a text $T$ of length $m \le n \le 2m$ and a pattern $P$ of length $m$. For the conditional lower bounds, $\delta > 0$ is an arbitrary constant. Bounds for $(1+\eps)$-approximate \DHD are not shown (see Section~\ref{sec:upper_approx_HD} for details). }\label{t:summary}
\end{table}

\section{Related work}
In the dynamic setting we consider with single character updates, the most closely related previous work considers the problem of dynamic exact matching. In~\cite{ALLS:2007}  an $O(\log{\log{m}})$ time algorithm was shown for dynamic exact matching when updates are only permitted in the text~\cite{ALLS:2007}. In~\cite{ABR:2000} a more general data structure was developed supporting insertion and deletion of characters and movements of  arbitrary large blocks of text. This was improved in a succession of papers culminating in the work~\cite{GKKLS:2015} who give a data structure that supports, amongst other properties, concatenation, splitting and equality testing  in $O(\log{m})$ update and $O(1)$ query time. The same data structure solves, for example, the dynamic exact matching problem without wildcards problem in $O(\log{m})$ time. At the expense of $O(\log^2{m})$ updates this latter work also supports finding occurrences of a specified pattern $P$ in $O(|P|)$ time. A separate line of work has considered the static data structure problem of text indexing for approximate matching~\cite{Baeza-yates:1997a,AKLLR:2000,CGL:2004,Boytsov:2011,HHLS:2006,Boytsov:2011,ALPS:2014}.

There has also been a number of papers working on conditional hardness for other types of string problems. Larsen et al. proved lower bounds for document retrieval and forbidden pattern document retrieval conditional on hardness of boolean matrix multiplication~\cite{LMNT:2015}. Later, Kopelowitz et al. showed 3SUM-conditional lower bounds for these two problems~\cite{KPP:2015}. Backurs and Indyk~\cite{BI:SETH:2015} proved lower bounds computing the edit distance of two string. Bringmann and K{\"{u}}nnemann~\cite{Bringmann:FOCS:2015} proved lower bounds for dynamic time warping and longest common subsequence. Finally, Backurs and Indyk~\cite{backursRegular} and follow up work by Bringmann et al.~\cite{bring:Regular} proves conditional lower bounds for regular expression matching. 

\section{Upper bounds for \DHD, \DIP, and \DEM}
\label{sec:upper}
In this section we show upper bounds for \DIP, \DHD, and \DEM problems. Recall that a query $i$ asks for $f(P, T[i,\dots, i+m-1])$.  For \DIP we define $f(P, T[i,\dots,i+m-1])$ to be equal to the inner product of $P$ and $T[i,\dots,i+m-1]$, for \DHD the Hamming distance between $P$ and $T[i,\dots,i+m-1]$. In the \DEM problem we assume that $P$ and $T$ are strings over $\Sigma \cap \{?\}$, where $\Sigma$ is an integer alphabet and $?$ is a special wildcard symbol that matches any letter in $\Sigma$. We define $f(P, T[i,\dots,i+m-1])$ to be equal to zero if $P$ matches $T[i,\dots,i+m-1]$ and the number of mismatching positions otherwise. We define $n$ to be the length of the text, and $m$ to be the length of the pattern, $n \ge m$.

We will in fact present a general solution for dynamic string problems where $f$ can be represented in a particular form.  \DIP, \DHD and \DEM will seen as special cases.  The restriction is simply that $f(P, T[i,\dots,i+m-1]) = \sum_{j=1}^{j=m} g(P[j], T[i+j-1])$, where the function $g$ can be evaluated in constant time. This functional form is closely related to the idea of \emph{local} distance functions that were key to the development of fast streaming pattern matching algorithms~\cite{CEPP:2011}.  We first show that our string problems do indeed satisfy the stated requirements.

\begin{lemma}\label{lm:function}
If $f$ is inner product, Hamming distance, or exact matching with wildcards, then there exists a function $g$ such that $f(P, T[i,\dots,i+m-1]) = \sum_{j=1}^{j=m} g(P[j], T[i+j-1])$, where the function $g$ can be evaluated in constant time. 
\end{lemma}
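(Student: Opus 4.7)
The plan is to exhibit an explicit $g$ for each of the three functions and check that the stated decomposition holds. In every case $g$ takes two alphabet symbols and performs either a single arithmetic operation or a constant number of equality checks, so constant-time evaluability under the standard word-RAM model is immediate and will not require further discussion.

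First, for \DIP, I would simply set $g(a,b) = a \cdot b$; the sum $\sum_{j=1}^{m} g(P[j], T[i+j-1])$ is then the inner product of $P$ and the length-$m$ substring of $T$ starting at $i$ by definition. For \DHD, I would take $g(a,b) = 1$ if $a \neq b$ and $g(a,b) = 0$ otherwise, so that the sum counts exactly the positions where $P$ and $T[i,\dots,i+m-1]$ disagree, which is the Hamming distance.

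The only case that requires a moment's thought is \DEM. Here I would define $g(a,b) = 0$ whenever $a = b$, $a = \text{?}$, or $b = \text{?}$, and $g(a,b) = 1$ otherwise. Each summand then vanishes precisely when the corresponding pair of characters matches under the wildcard convention, so the total equals the number of mismatching positions. In particular, the sum is zero exactly when $P$ matches $T[i,\dots,i+m-1]$, which is the value $f$ is required to take in the matching case, while it gives the mismatch count otherwise—matching the definition of $f$ for \DEM. No real obstacle arises; the lemma is a direct unpacking of the definitions of the three functions, with the wildcard rule being the only point that needs explicit verification.
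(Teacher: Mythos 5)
Your proposal is correct, and for inner product and Hamming distance it is identical to the paper's. The only divergence is in the wildcard case: you define $g$ by a direct case split (output $0$ if either argument is a wildcard or the two arguments agree, else $1$), whereas the paper encodes the wildcard as the value $0$ and takes $g$ to be the characteristic function of $(P_j - T_{i+j-1})^2\,P_j T_{i+j-1} > 0$. Both definitions satisfy the lemma — each is a constant-time predicate whose sum gives the mismatch count — so your argument is a valid and slightly more elementary alternative. The paper's algebraic formulation is chosen deliberately because it is exactly the quantity computed by the FFT-based exact-matching-with-wildcards algorithm cited in Lemma~\ref{cor:upper_IP_EM}; stating $g$ in that form makes the later appeal to the $O(n\log n)$ batch computation of all $f(P,T[i,\dots,i+m-1])$ immediate. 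Your version proves the lemma just as well but would require a small additional remark when invoking the FFT algorithm to note the two forms coincide.
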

\begin{proof}
If $f$ is inner product, we put $g(P_j, T_{i+j-1}) = P_j \cdot T_{i+j-1}$. In the case of Hamming distance, we define $g(P_j, T_{i+j-1}) = 0$ if $P_j = T_{i+j-1}$ and $g_j(P_j, T_{i+j-1}) = 1$ otherwise.

For \DEM we assume that wildcards are represented by the value $0$. It is not hard to see that we can take $g(P_j, T_{i+j-1})$ to be the characteristic function of ${(P_j-T_{i+j-1})^2 P_jT_{i+j-1} > 0}$ and indeed this observation is the basis for one of the fastest offline exact matching with wildcards algorithms~\cite{Clifford:2007}. The key property we use is that either (a) if one of $P_j$ and $T_{i+j-1}$ is a wildcard or $P_j = T_{i+j-1}$ then $g (P_j, T_{i+j-1}) = 0$, or (b) $P_j \neq T_{i+j-1}$ and then $g (P_j, T_{i+j-1}) > 0$. It follows that $f(P, T[i,\dots,i+m-1])$ equals zero if and only if $P$ and $T[i,\dots,i+m-1]$ match.
\end{proof}

We now show a solution for all dynamic string problems defined by a function $f$ that can be represented in the form above. We consider the most general update model, where we are allowed to update both the text and the pattern.

\begin{theorem}\label{th:upper}
Let $T$ be a text of length $n$, and $P$ be a pattern of length $m$. Assume $f$ can be represented as $f(P, T[i,\dots,i+m-1]) = \sum_{j = 1}^{j=m} g(P_j, T_{i+j-1})$, where $g$ can be computed in constant time, and the values $f(P, T[1,\dots,m])$, $f(P, T[2,\dots,m+1])$, \ldots, $f(P, T[n-m+1,\dots,n])$ can be computed in $\Time(n)$ time and $S(n)$ space. We can then solve the corresponding dynamic string problem in $O(\sqrt{\Time(n)})$ worst case update/query time using $O(S(n)+n)$ space. 
\end{theorem}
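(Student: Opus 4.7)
The plan is to use a lazy rebuilding (global rebuilding) scheme. First I would run the assumed static algorithm once, storing an array $A$ with $A[i] = f(P, T[i,\dots,i+m-1])$ for all valid $i$, in time $\Time(n)$ and space $O(S(n) + n)$. Alongside $A$, I would keep snapshots $P^{0}$ and $T^{0}$ of the pattern and text at the moment $A$ was built, as well as the current strings $P, T$ maintained under updates. I would also maintain an explicit list $L$ of positions that have been touched since the last rebuild, split as $J_P \subseteq \{1,\dots,m\}$ for the pattern and $J_T \subseteq \{1,\dots,n\}$ for the text.

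The crucial observation is that, because $f$ decomposes as $f(P,T[i,\dots,i+m-1]) = \sum_{j=1}^{m} g(P_j, T_{i+j-1})$, any query can be answered by correcting $A[i]$ with a telescoping sum:
\[
f(P, T[i,\dots,i+m-1]) \;=\; A[i] \;+\; \sum_{j \in I_i} \Bigl( g(P_j, T_{i+j-1}) - g(P^{0}_j, T^{0}_{i+j-1}) \Bigr),
\]
where $I_i = J_P \cup \{k-i+1 : k \in J_T,\ 1 \le k-i+1 \le m\}$; outside of $I_i$ both arguments of $g$ are unchanged since the last rebuild, so those terms cancel. Since $g$ takes $O(1)$ time and $|I_i| \le |J_P| + |J_T| \le |L|$, a query costs $O(|L|)$. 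An update costs $O(1)$: it changes one entry of $P$ or $T$ and appends its index to $J_P$ or $J_T$.

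To balance costs I would trigger a rebuild every $B = \lceil \sqrt{\Time(n)}\rceil$ updates, which recomputes $A$, copies the current strings into $P^0,T^0$, and clears $L$. Amortized, the rebuild contributes $\Time(n)/B = O(\sqrt{\Time(n)})$ per update, and queries take $O(B) = O(\sqrt{\Time(n)})$. To promote this to a worst-case bound I would deamortize by the standard two-copy trick: start the next rebuild immediately after the current one finishes and spread its $\Time(n)$ work evenly over the next $B$ operations, doing $O(\Time(n)/B)$ work per update. During the rebuild, queries are answered against the previously completed table together with the log accumulated since that earlier snapshot, which has length at most $2B$; this keeps every operation within $O(\sqrt{\Time(n)})$ worst-case time. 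Space is $O(S(n))$ for running the static algorithm in the background, $O(n)$ for the current and the in-progress copies of $A$ together with $P,T,P^0,T^0$, and $O(B) = O(n)$ for the log, totaling $O(S(n) + n)$.

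The only genuinely delicate point is the worst-case deamortization: I have to make sure that (i) the snapshot $P^0, T^0$ used by the in-progress static algorithm is not clobbered by concurrent updates (hence the explicit copy) and (ii) the log used at query time refers to the correct snapshot, namely the one belonging to the currently-active $A$, not to the one being built. Everything else is routine algebra from the decomposition of $f$ provided by Lemma~\ref{lm:function}.
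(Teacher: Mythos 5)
Your proof takes essentially the same approach as the paper: lazy rebuilding with periodic recomputation of the array $A$, query correction exploiting the per-position decomposition of $f$ into sums of $g$, and the standard two-copy deamortization. The only cosmetic difference is that you express the correction as a telescoping sum against an explicit snapshot $(P^0,T^0)$ over the set of touched positions, whereas the paper replays the update log in order and applies incremental $-g(\text{old})+g(\text{new})$ corrections; both yield the same $O(\sqrt{\Time(n)})$ bound and the same space accounting.
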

\begin{proof}
Let us first show a solution with $O(\sqrt{\Time(n)})$ amortised time. We start by computing values $A[1] = f(P, T[1,\dots,m]), \ldots, A[n-m+1] = f(P, T[n-m+1,\dots,n])$ in $O(\Time(n))$ time and $S(n)$ space. At all times, we maintain a list of updates $U$ that have occurred since the last moment we recomputed the values $A[i]$. Suppose that the size of $U$ is at most $\lceil \sqrt{\Time(n)} \rceil$ and a query $i$ arrives. We can then compute $A'[i] = f(P, T[i,\dots,i+m-1])$ from $A[i]$ and $U$ in the following way. We initialise $A'[i] = A[i]$, and consider each update in order. Suppose that an update change letters in a position $k$ of $P$ or $T[i,\dots,i+m-1]$, and let $P'_k$ and $T'_{i+k-1}$ be the updated letters. We remember $P'_k$ and $T'_{i+k-1}$, and set

$$A'[i] \leftarrow A'[i] - g(P_k, T_{i+k-1}) + g(P'_k, T'_{i+k-1})$$
Since $g$ can be evaluated in constant time, this step takes constant time as well. Therefore, the time to perform each query is $O(\sqrt{\Time(n)})$. When the size of $U$ reaches $\lceil \sqrt{\Time(n)} \rceil$, we apply the updates in $U$ to $T$ and $P$, empty $U$, and recompute the values $A[i]$ from scratch. The amortised cost of an update is therefore $O(\sqrt{\Time(n)})$.

We can de-amortise the solution in a standard way. Namely, we restart the computation of the values $A[i]$ each $\lceil \sqrt{\Time(n)}/2 \rceil$ updates, and run $\Theta(\sqrt{\Time(n)})$ steps of the computation per each of the $\lceil \sqrt{\Time(n)}/2 \rceil$ subsequent updates. While the computation is not over, we make use of the previously computed values $f(P, T[1,\dots,m]) ,\ldots, f(P, T[n-m+1,\dots,n])$ to answer queries. As before, we will need to correct the value of the function $g$ in at most $\lceil \sqrt{\Time(n)}/2 \rceil$ positions. Note that apart from the space we need for computing the values $f(P, T[1,\dots,m])$, $f(P, T[2,\dots,m+1])$, \ldots, $f(P, T[n-m+1,\dots,n])$, we need only $O(n)$ space.
\end{proof} 

\begin{lemma}\label{cor:upper_HD}
For a text $T$ of length $m \le n \le 2m$, and a pattern $P$ of length $m$, problem \DHD can be solved in  $O(\sqrt{m \log m})$ query/update time for constant-size alphabets, and in $O(m^{3/4} \log^{1/4} m)$ query/update time for polynomial-size alphabets. Both solutions use $O(m)$ space, and both updates to the text and to the pattern are allowed.
\end{lemma}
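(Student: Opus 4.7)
My plan is to apply Theorem~\ref{th:upper} directly, which reduces the task to quoting two standard offline bounds for computing the Hamming distance of $P$ against every length-$m$ window of $T$. Lemma~\ref{lm:function} already shows that Hamming distance satisfies the required decomposition $f(P, T[i,\dots,i+m-1]) = \sum_{j=1}^{m} g(P_j, T_{i+j-1})$ with $g$ evaluable in $O(1)$ time, so the only things left to determine are $\Time(n)$ and $S(n)$ in the two alphabet regimes, with $n \le 2m$.

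For a constant-size alphabet $\Sigma$, I would invoke the textbook FFT-based reduction: for each $\sigma \in \Sigma$ build indicator vectors of $P$ and $T$ for $\sigma$, convolve them in $O(n \log n)$ time, and sum the resulting arrays to obtain the number of matches (and hence the Hamming distance) at every alignment. Since $|\Sigma| = O(1)$ this gives $\Time(n) = O(n \log n) = O(m \log m)$ and $S(n) = O(m)$. Plugging into Theorem~\ref{th:upper} yields $O(\sqrt{m \log m})$ update/query time in $O(m)$ space.

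For a polynomial-size alphabet I would appeal to Abrahamson's algorithm, which computes Hamming distance at all $n - m + 1$ alignments in $\Time(n) = O(n \sqrt{m \log m}) = O(m^{3/2}\sqrt{\log m})$ time and $O(m)$ space. Plugging this into Theorem~\ref{th:upper} gives
\[
O\!\left(\sqrt{m^{3/2}\sqrt{\log m}}\right) \;=\; O\!\left(m^{3/4}\log^{1/4} m\right)
\]
worst-case update/query time, again in $O(m)$ space, and updates to both $P$ and $T$ are supported because Theorem~\ref{th:upper} already handles that general model.

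There is no real obstacle: the work is entirely carried by Theorem~\ref{th:upper}, and the proof is a two-line calculation in each regime. The only minor point worth stating explicitly is that $n = \Theta(m)$ (from $m \le n \le 2m$), so the $n$ in the offline bounds can be replaced by $m$ without changing the asymptotics, and that the $O(m)$ space bound from the offline algorithms dominates the additive $O(n) = O(m)$ overhead required by Theorem~\ref{th:upper}.
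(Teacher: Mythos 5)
Your proposal is correct and matches the paper's proof essentially line for line: both invoke Lemma~\ref{lm:function} and Theorem~\ref{th:upper}, use per-symbol FFT convolutions for constant alphabets to get $\Time(n)=O(m\log m)$, and use the Abrahamson/Kosaraju $O(m\sqrt{m\log m})$ algorithm for polynomial alphabets, then take square roots. The only cosmetic difference is that the paper spells out the FFT bookkeeping (running it once per alphabet symbol on indicator strings) in slightly more detail.
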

\begin{proof}
If the alphabet is binary, the values $f(P, T[1,\dots,m])$,\ldots, $f(P, T[n-m+1,\dots,n])$ can be computed by running the FFT algorithm twice. Recall that the FFT algorithm computes the inner product for each alignment of two strings. By running the FFT algorithm on $P$ and $T$ for the first time, we obtain, for each $i$, the number of positions $j$ such that $P[j] = T[i+j] = 1$. By running it for the second time on the copies $P$ and $T$ where each bit is flipped, we obtain, for each $i$, the number of positions $j$ such that $P[j] = T[i+j] = 0$. We can then compute the values $f(P, T[1,\dots,m]) ,\ldots, f(P, T[n-m+1,\dots,n])$ in linear time. For this algorithm, $\Time(n) = O(n \log n) = O(m \log m)$. For alphabets of constant size $|\Sigma|$, we run the FFT algorithm $|\Sigma|$ times, once for each letter $a \in \Sigma$, on the copies of $P$ and $T$ where $a$ is replaced with $1$ and all letters in $\Sigma \setminus \{a\}$ are replaced with $0$. $T(n) = O(m \log m)$ as well. For polynomial-size alphabets, $\Time(n) = O(n\sqrt{n \log{n}}) = O(m \sqrt{m \log m})$ and $S(n) = O(n) = O(m)$ bounds were shown independently by Abrahamson~\cite{Abrahamson:1987} and Kosaraju~\cite{Kosaraju:1987} in 1987. The claim immediately follows from Lemma~\ref{lm:function} and Theorem~\ref{th:upper}.
\end{proof}

\begin{lemma}\label{cor:upper_IP_EM}
For a text $T$ of length $m \le n \le 2m$, and a pattern $P$ of length $m$, problems \DIP and \DEM can be solved in $O(\sqrt{m \log m})$ query/update time using $O(m)$ space. Both updates to the text and to the pattern are allowed.
\end{lemma}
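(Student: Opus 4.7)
The plan is to apply Theorem~\ref{th:upper} directly to both problems, so the task reduces to exhibiting, for each of \DIP and \DEM, an offline algorithm that computes all values $f(P, T[1,\dots,m]), f(P, T[2,\dots,m+1]), \ldots, f(P, T[n-m+1,\dots,n])$ in $\Time(n) = O(n \log n)$ time and $S(n) = O(n)$ space. Since $n \le 2m$, this gives $\Time(n) = O(m \log m)$ and hence $O(\sqrt{m \log m})$ update/query time with $O(m)$ space, as required.

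For \DIP the bound is immediate from the FFT. Reversing $P$ and zero-padding, the convolution of $P^R$ with $T$ computed by FFT gives all cross-correlation values $\sum_{j=1}^{m} P[j]\, T[i+j-1]$ in $O(n \log n)$ time and $O(n)$ space, and these are exactly the required inner products.

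For \DEM the key step, going back to Clifford~\cite{Clifford:2007}, is to observe that the function $g(P_j, T_{i+j-1}) = (P_j - T_{i+j-1})^2 P_j T_{i+j-1}$ from Lemma~\ref{lm:function} expands as a polynomial with a constant number of monomial terms, each of the form $c \cdot P_j^a \, T_{i+j-1}^b$ for small constants $a, b, c$. Summing over $j$ therefore decomposes $f(P, T[i,\dots,i+m-1])$ into a constant-size linear combination of cross-correlations of the sequences $(P_j^a)_j$ and $(T_k^b)_k$. Each such cross-correlation is computed by one FFT in $O(n \log n)$ time and $O(n)$ space; adding the constant number of them with the appropriate coefficients yields all values of $f$ in $\Time(n) = O(n \log n)$ time and $S(n) = O(n)$ space.

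The main thing to be careful about is simply lining up the assumptions of Theorem~\ref{th:upper}: we need the $g$ given by Lemma~\ref{lm:function} (the same $g$ that drives the dynamic update step) to be the one whose sum is computed offline by FFT, so that the correction step in the proof of Theorem~\ref{th:upper} remains valid. Since the Clifford expansion computes exactly $\sum_{j} g(P_j, T_{i+j-1})$ (not some other related quantity), this matches cleanly, and the two cases of the lemma follow.
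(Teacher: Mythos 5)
Your proof is correct and follows the same route as the paper: invoke Theorem~\ref{th:upper} with Lemma~\ref{lm:function}, use the FFT for \DIP, and use the Clifford~\cite{Clifford:2007} polynomial-expansion-plus-FFT algorithm for \DEM to get $\Time(n) = O(n \log n)$ and $S(n) = O(n)$. The paper simply cites~\cite{CH:2002,Clifford:2007} for the \DEM offline bound rather than spelling out the monomial decomposition as you do, but the substance is identical.
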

\begin{proof}
For both problems, $\Time(n) = O(n \log n) = O(m \log m)$ and $S(n) = O(n) = O(m)$. For inner product, this is a direct corollary of the FFT algorithm. The bound for exact matching with wildcards was demonstrated in~\cite{CH:2002,Clifford:2007}. The claim follows from Lemma~\ref{lm:function} and Theorem~\ref{th:upper}.
\end{proof}

We now extend our solution to a general value of $n$. In this case there is also an additional cost of computing the full set of solutions before the first query or update is performed which we omit from the following theorem.

\begin{theorem}\label{thm:upperbounds-exact}
For a text $T$ of length $n \ge m$, and a pattern $P$ of length $m$, there is a linear-space data structure that solves
\begin{enumerate}[(a)]
\item the \DHD problem in $O(\sqrt{m \log m})$ query/update time for constant-size alphabets, and in $O(m^{3/4} \log^{1/4} m)$ query/update time for polynomial-size alphabets, and the \DIP and the \DEM problems in $O(\sqrt{m \log m})$ query/update time if only updates to the text are allowed;
\item the \DHD problem in $O(\frac{n}{m} \cdot \sqrt{m \log m})$ query/update time for constant-size alphabets, and in $O(\frac{n}{m} \cdot m^{3/4} \log^{1/4} m)$ query/update time for polynomial-size alphabets, and the \DIP and the \DEM problems in $O(\frac{n}{m} \sqrt{m \log m})$ query/update time when updates are allowed both to the text and to the pattern.
\end{enumerate}
\end{theorem}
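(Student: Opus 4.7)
The plan is to reduce the general case $n \ge m$ to the already-solved case $n \le 2m$ by partitioning the text into overlapping blocks. Specifically, I would split $T$ into $\lceil n/m\rceil$ blocks $B_1, B_2, \ldots$, where $B_k = T[(k-1)m + 1, \ldots, \min(n,(k+1)m)]$, so that each block has length at most $2m$ and consecutive blocks overlap by $m$ symbols. The key geometric observation is that for every query position $i \in \{1, \ldots, n-m+1\}$, the substring $T[i, \ldots, i+m-1]$ is contained entirely inside at least one block $B_k$. For each block I would instantiate an independent copy of the data structure from Theorem~\ref{th:upper} on the pattern $P$ and the block $B_k$, using the appropriate $\Time(n)$ and $S(n)$ bounds from Lemma~\ref{cor:upper_HD} or Lemma~\ref{cor:upper_IP_EM}. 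The total space is $O(n)$ since each block occupies $O(m)$ space and there are $O(n/m)$ blocks.

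To answer a query at position $i$, I would identify a block $B_k$ containing the window $T[i, \ldots, i+m-1]$ (there is always at least one, by the length-$m$ overlap), and forward the query to the corresponding data structure, which costs $O(\sqrt{\Time(m)})$ by Theorem~\ref{th:upper}. For part (a), where only text updates are allowed, an update at text position $j$ touches the at most two blocks containing position $j$; each such per-block update costs $O(\sqrt{\Time(m)})$, giving the claimed bound. For part (b), a pattern update at position $k$ must be propagated to every block's data structure, since the pattern is shared, which incurs a multiplicative factor of $O(n/m)$; text updates remain cheaper and are dominated by this bound.

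Plugging in $\Time(m) = O(m \log m)$ for inner product, exact matching with wildcards, and Hamming distance over constant-size alphabets, and $\Time(m) = O(m\sqrt{m \log m})$ for Hamming distance over polynomial alphabets (via Abrahamson/Kosaraju), yields all six bounds stated in the theorem. The correctness of the per-block queries follows because the data structure of Theorem~\ref{th:upper} is oblivious to whether $B_k$ sits inside a larger text, and the amortised-to-worst-case de-amortisation in Theorem~\ref{th:upper} carries over block by block.

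The only mildly delicate point, and what I expect to be the main bookkeeping obstacle, is ensuring that the block structure handles pattern updates correctly in the de-amortised scheme: a pattern update must atomically appear in every block's ongoing rebuild, and the list $U$ of pending updates maintained by each block must record pattern edits consistently so that no block answers a query with a stale partial rebuild. This is straightforward since the per-block update cost is $O(\sqrt{\Time(m)})$ and the bound in part (b) already pays the $n/m$ factor for broadcasting; one simply verifies that broadcasting the update to all blocks before returning control preserves the worst-case guarantee of Theorem~\ref{th:upper} on each block independently.
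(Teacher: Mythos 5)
Your proof is correct and takes essentially the same approach as the paper: partition $T$ into $2m$-length blocks overlapping by $m$, instantiate an independent copy of the Theorem~\ref{th:upper} structure per block, route each query to a containing block, and observe that a text update touches at most two blocks while a pattern update must be broadcast to all $O(n/m)$ blocks. The additional remark you make about consistency of the de-amortised rebuild across blocks is a fine sanity check but is not treated separately in the paper, and it does not change the argument.
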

\begin{proof}
We first partition $T$ into blocks of length $2m$ overlapping by $m$ positions (the last block may be shorter). Note that for each $i$ a string $T[i,\dots,i+m-1]$ is a substring of one of such blocks, and each position of $T$ belongs to at most two blocks. Suppose that we have a solution for a text of length $2m$ and a pattern of length $m$ with update time $t_{u}$, query time $t_q$, and space $S$.
\begin{enumerate}[(a)]
\item If only updates to the text are allowed, we can apply this solution independently to each of the blocks. Note that an update of the text changes at most two blocks, and therefore we obtain a solution for $T$ with update time $O(t_{u})$, query time $t_q$, and space $O(\frac{n}{m} \cdot S)$. 
\item If updates are allowed both to the text and to the pattern, we obtain a solution for $T$ with update time $O(\frac{n}{m} \cdot t_{u})$, query time $t_q$, and space $O(\frac{n}{m} \cdot S)$. 
\end{enumerate}
The claim follows from Lemmas~\ref{cor:upper_HD} and~\ref{cor:upper_IP_EM}.
\end{proof}

\section{Upper bounds for dynamic approximate Hamming distance}
\label{sec:upper_approx_HD}
In this section we develop algorithms for an approximate version of \DHD. We will refer to this version as $\DaHD$. In this problem a query $i$ must return a $(1+\eps)$-approximation of the Hamming distance between $P$ and $T[i,\dots,i+m-1]$, where $\eps > 0$ is a parameter of the algorithm.  Unlike the other problems we have considered, the complexity of \DaHD appears to have a strong dependence on whether updates are permitted only in the pattern or text or in both.  At one extreme, when updates are only permitted in the pattern and the input alphabet is binary, we show in Theorem~\ref{th:upper_DaHD} a data structure that takes  $O(1/\eps)$ update and $O(1/\eps^2)$ query time.  However if updates can occur in both the pattern and the text, then the complexity increases dramatically to be at least that of exact \DIP, \DEM and \DHD over binary alphabets.

In Section~\ref{sec:upper} we showed that the \DHD problem can be solved in $O(m^{1/2} \log^{1/2} m)$ query/update time for constant-size alphabets, and in $O(m^{3/4} \log^{1/4} m)$ query/update time for polynomial-size alphabets.  We start our exploration of the complexity of \DaHD by showing that this dependence on the alphabet size is almost completely removed in this approximate setting.  The solution we give is deterministic and is based on the mapping idea of Karloff~\cite{Karloff:1993}. 

\begin{lemma}[\cite{Karloff:1993}]
Let $\Sigma$ be the alphabet of $P$ and $T$. There exists $\Theta((1/\eps^2) \log^2 n)$ deterministic mappings $map_j : \Sigma \rightarrow \{0, 1\}$ such that a $(1 +\eps)$-approximation of the Hamming distance between $P$ and $T$ at a particular alignment can be given by a normalised average of the Hamming distances between $map_j(P) = map_j(P_1) \ldots map_j(P_n)$ and $map_j(T) = map_j(T_1) \ldots map_j(T_n)$ at this alignment. Each mapping can be stored as a look-up table that permits to compute each $map_j(P_k)$ or $map_j(T_k)$ in $O(1)$ time. 
\end{lemma}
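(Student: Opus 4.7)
The plan is to realise the $\ell = \Theta(\eps^{-2} \log^2 n)$ mappings as the coordinates of a binary code with uniformly balanced pairwise distances. Concretely, I want an injection $C \colon \Sigma \to \{0,1\}^\ell$ such that any two distinct symbols $a \neq b$ satisfy
\[
(1-\eps)\ell/2 \;\le\; \bigl|\{j : C(a)_j \neq C(b)_j\}\bigr| \;\le\; (1+\eps)\ell/2.
\]
The $j$-th mapping is then defined by $map_j(a) = C(a)_j$, and the full $|\Sigma| \times \ell$ table of code entries provides constant-time evaluation of each $map_j$ on a symbol.

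Given such a code, the lemma reduces to a one-line double count. For any alignment of $P$ and $T$, writing $H_j$ for the Hamming distance between $map_j(P)$ and $map_j(T)$,
\[
\sum_{j=1}^{\ell} H_j \;=\; \sum_{i=1}^{n} \bigl|\{j : C(P_i)_j \neq C(T_i)_j\}\bigr|.
\]
Indices with $P_i = T_i$ contribute $0$ while indices with $P_i \neq T_i$ contribute a value in $[(1-\eps)\ell/2,\,(1+\eps)\ell/2]$ by the code property. Dividing through by $\ell/2$ shows that $\tfrac{2}{\ell}\sum_j H_j$ lies in $[(1-\eps) H,\,(1+\eps) H]$, where $H$ is the Hamming distance of $P$ and $T$ at this alignment, i.e.\ the normalised average of the $H_j$ is the claimed $(1+\eps)$-approximation.

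It remains to build $C$. A uniformly random $|\Sigma| \times \ell$ binary matrix has expected column-wise disagreement $\ell/2$ on any fixed pair of rows, and a Chernoff bound concentrates this within a $(1 \pm \eps)$ factor with failure probability $e^{-\Omega(\eps^2 \ell)}$; a union bound over the at most $n^2$ pairs of symbols that actually appear in $P$ or $T$ is satisfied once $\ell = \Omega(\eps^{-2} \log n)$, giving a randomised construction. For the deterministic version I would replace the random matrix with an $\eps$-balanced linear code on $\log|\Sigma| \le \log n$ message bits, which exists explicitly with block length $\ell = O(\eps^{-2} \log^2 n)$ via a Naor--Naor style small-bias generator; because every nonzero codeword of such a code has relative weight in $[(1-\eps)/2,(1+\eps)/2]$, the pairwise distances $C(a) \oplus C(b) = C(a \oplus b)$ automatically fall in the required window.

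The averaging identity and its implication are routine, so the only genuine obstacle is producing the look-up table deterministically at the stated length $O(\eps^{-2}\log^2 n)$. I would handle this by invoking an explicit small-bias (equivalently, $\eps$-balanced) code as a black box: the bias bound translates directly into the Hamming-weight window demanded above, and the construction is polynomial-time, matching the number of mappings asserted in the lemma.
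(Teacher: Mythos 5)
The paper does not prove this lemma---it is cited directly from Karloff~\cite{Karloff:1993}---so there is no in-paper argument to compare against. Your reconstruction is a valid proof. The view of the mappings as the $\ell$ coordinates of a code $C:\Sigma\to\{0,1\}^\ell$ with pairwise distances in $[(1-\eps)\ell/2,(1+\eps)\ell/2]$, the double-count $\sum_j H_j=\sum_i|\{j: C(P_i)_j\neq C(T_i)_j\}|$, and the observation that matches contribute $0$ while each mismatch contributes a quantity in the stated window, together give exactly the claimed $(1+\eps)$-approximation after normalising by $\ell/2$; the randomised existence via Chernoff plus a union bound over symbol pairs is also fine. Two details are worth tightening in the derandomisation step. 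First, the $O(\eps^{-2}\log^2 n)$ block length does not come from the original Naor--Naor generator (whose $\eps$-dependence is worse); it comes from the Alon--Goldreich--H{\aa}stad--Peralta family of $\eps$-biased sets of size $O((k/\eps)^2)$ over $k=\Theta(\log|\Sigma|)=O(\log n)$ message bits, which is the bound the lemma's $\Theta(\eps^{-2}\log^2 n)$ reflects, so that is the construction to invoke. Second, to use the linearity $C(a)\oplus C(b)=C(a\oplus b)$ you should first fix an arbitrary injection of $\Sigma$ into $\{0,1\}^k$ and define the linear code on all of $\{0,1\}^k$; then for distinct $a,b\in\Sigma$ the message $a\oplus b$ is nonzero and the $\eps$-balance property of the code applies. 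This is implicit in your write-up but deserves to be stated, since $\Sigma$ itself need not be closed under XOR.
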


\begin{corollary}\label{cor:upper_HD_approx}
For a text $T$ of length $m \le n \le 2m$, and a pattern $P$ of length $m$, the \DaHD problem over polynomial-size alphabets can be solved in $O((1/\eps^2) \sqrt{m} \cdot \polylog m)$ query/update time and $O((1/\eps^2) m \log^2 m)$ space.
\end{corollary}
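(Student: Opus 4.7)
The plan is to combine Karloff's mapping from the preceding lemma with the binary-alphabet \DHD data structure from Lemma~\ref{cor:upper_HD} in a straightforward product construction. Concretely, I would maintain $N = \Theta((1/\eps^2)\log^2 m)$ independent instances of the binary \DHD data structure, one for each mapping $map_j : \Sigma \to \{0,1\}$, built on the strings $map_j(P)$ and $map_j(T)$. Each of these binary instances runs in $O(\sqrt{m\log m})$ query/update time and $O(m)$ space by Lemma~\ref{cor:upper_HD}. In addition, I would store the look-up tables for the $N$ mappings.

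To answer a query at position $i$, I would query each of the $N$ binary instances for the Hamming distance between $map_j(P)$ and $map_j(T[i,\dots,i+m-1])$, and then output the normalized average guaranteed by Karloff's lemma to be a $(1+\eps)$-approximation. This takes $O(N\cdot\sqrt{m\log m}) = O((1/\eps^2)\sqrt{m}\cdot\polylog m)$ time. To process an update at position $k$ of $P$ (resp.\ $T$) changing a letter from $\sigma$ to $\sigma'$, I would look up $map_j(\sigma)$ and $map_j(\sigma')$ for every $j$ in $O(1)$ time each, and when they differ, push the corresponding binary update into the $j$-th instance. This again costs $O(N\cdot\sqrt{m\log m})$ time.

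For space, the $N$ binary data structures use $O(N\cdot m) = O((1/\eps^2)m\log^2 m)$ total space, and the look-up tables fit in $O(N\cdot|\Sigma|)$ words, which is absorbed into the bound for polynomial-size alphabets. The result follows by invoking Lemma~\ref{cor:upper_HD} and Karloff's lemma.

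The proof is essentially bookkeeping; the only substantive point to check is that Karloff's lemma really does allow us to reconstruct the $(1+\eps)$-approximate Hamming distance from a fixed linear combination of the binary Hamming distances at the same alignment, so that no additional per-query computation beyond the $N$ binary queries and an $O(N)$-time averaging is needed. Given that guarantee, there is no real obstacle, and the running time and space bounds follow immediately from the product construction.
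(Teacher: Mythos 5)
Your proposal is exactly the paper's argument: run one copy of the Lemma~\ref{cor:upper_HD} binary \DHD structure per Karloff mapping $map_j$, route updates through the look-up tables, and average the $N = \Theta((1/\eps^2)\log^2 m)$ answers at query time. The paper states this in one line ("run our \DHD solution on $map_j(P)$ and $map_j(T)$; the claim immediately follows"), and your write-up just fills in the bookkeeping you correctly identify as routine.
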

\begin{proof}
We consider Karloff's mappings $map_j$. For each $j$, we run our $\DHD$ solution for constant-size alphabets (Lemma~\ref{cor:upper_HD}) on $map_j(P)$ and $map_j(T)$. The claim immediately follows.
\end{proof}

We now present several randomised solutions for \DaHD in two special update models where we are allowed to update either only the text or only the pattern. 
We first assume a binary input alphabet, and then show how to extend our solutions to constant-size and then later polynomial-size alphabets as well.

\begin{theorem}\label{th:upper_DaHD}
For a text $T$ of length $n \ge m$, and a pattern $P$ of length $m$, there is a randomised data structure for the \DaHD problem over a constant-sized alphabet with
\begin{enumerate}[(a)]
\item $O(1/\eps)$ update time, $O(1/\eps^2)$ query time, and $O((1/\eps^2) \cdot n)$ space if only updates to the pattern are allowed;
\item $O((1/\eps) \cdot \polylog n)$ update time and $O((1/\eps^2) \cdot \polylog n)$ query time using $O((1/\eps^2) \cdot n \; \polylog n)$ space if only updates to the text are allowed.
\end{enumerate}
Each answer is correct with constant probability.
\end{theorem}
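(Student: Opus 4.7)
The plan is to reduce approximate Hamming distance to approximate squared Euclidean distance and apply a linear Johnson--Lindenstrauss-type sketch that supports fast incremental updates. For a constant-size alphabet $\Sigma$, I would encode each symbol $\sigma$ by its indicator vector $\iota(\sigma) \in \{0,1\}^{|\Sigma|}$ and concatenate; then $\mathrm{HD}(P,T[i,\dots,i+m-1]) = \tfrac12 \|\iota(P)-\iota(T[i,\dots,i+m-1])\|_2^2$, so at constant multiplicative cost I may pretend $P$ and $T$ are binary of length $O(m)$ and $O(n)$ respectively. With this reduction in hand, I invoke a sparse Johnson--Lindenstrauss sketch of Kane--Nelson type with $k = \Theta(1/\eps^2)$ rows and column sparsity $s = \Theta(1/\eps)$: for any fixed binary $x, y$, the estimator $\|S(x-y)\|_2^2 / k$ is a $(1\pm\eps)$-approximation of $\|x-y\|_2^2$ with constant probability, and each coordinate update to $x$ touches only the $O(1/\eps)$ rows where the corresponding column of $S$ is nonzero.

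For part (a), the text is static. I precompute the windowed text sketches $\hat{T}_i^{(l)} = \sum_{j=1}^m S^{(l)}_j T[i+j-1]$ for every window start $i$ and every row $l$ by correlating each row $S^{(l)}$ with $T$ using FFT, in $O((n/\eps^2)\log m)$ preprocessing time and $O(n/\eps^2)$ space. I maintain a single pattern sketch $\hat{P}^{(l)} = \sum_{j=1}^m S^{(l)}_j P[j]$; an update at position $j$ modifies $\hat{P}$ only in the $O(1/\eps)$ rows where column $j$ of $S$ is nonzero, giving $O(1/\eps)$ update time. A query at $i$ returns $\|\hat{T}_i - \hat{P}\|_2^2 / k$ in $O(1/\eps^2)$ time.

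For part (b), the pattern is static and I reverse the roles. I precompute the shifted pattern sketches $\hat{P}_i^{(l)} = \sum_{j=1}^m S^{(l)}_{i+j-1} P[j]$ for every alignment $i$, again by correlating each row $S^{(l)}$ with $P$. I then maintain the text through a balanced segment tree over $[1,n]$ whose node covering $[a,b]$ stores the $k$-dimensional partial-sum sketch $\bigl(\sum_{j=a}^b S^{(l)}_j T[j]\bigr)_{l=1}^k$. A text update at position $j$ propagates through $O(\log n)$ ancestors, and sparsity ensures that only $O(1/\eps)$ rows change at each ancestor, giving $O((1/\eps)\log n)$ update time. A query decomposes the window $[i,i+m-1]$ into $O(\log n)$ canonical segment-tree nodes, sums their sketches coordinatewise, and compares with $\hat{P}_i$ to output $\|(\text{sum})-\hat{P}_i\|_2^2 / k$ in $O((1/\eps^2)\log n)$ time; the total space is $O(n/\eps^2)$, comfortably within the claimed $O((1/\eps^2) n \, \polylog n)$.

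The main technical obstacle is alignment: a single JL row is just one random sign vector, so the pattern and window being compared must be hashed against the \emph{same} signs. This forces the asymmetric precomputations above --- in (a) I index signs by window-offset and store one text sketch per window, while in (b) I index signs by absolute position and store one shifted pattern sketch per alignment. The remaining verifications, that the sparse Kane--Nelson sketch delivers the required $(1+\eps)$-approximation of $\ell_2^2$ distance of short binary vectors with constant probability and that its linearity is compatible with the segment-tree decomposition, are immediate from the sketch's analysis.
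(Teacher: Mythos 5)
Your part (a) is the same construction as the paper's: precompute window sketches for the static text, maintain a single sparse-JL sketch of $P$ under updates in $O(1/\eps)$ time per update, and estimate via the squared Euclidean distance of the two $\Theta(1/\eps^2)$-dimensional sketches. Your part (b), however, takes a genuinely different and arguably cleaner route than the paper. The paper builds dyadic canonical substrings of $T$, stores $\Theta(\log\log m)$ independent sketches per canonical piece (and per $2^k$-length pattern substring at every offset) so that each per-piece Hamming-distance estimator is accurate with probability $1 - \Theta(1/\log m)$, and then forms the query answer as a \emph{sum of $O(\log m)$ separate estimates}, union-bounding the error; this costs an extra $\log\log m$ factor in both update and query time. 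You instead exploit linearity: the segment tree stores exact partial sums $\sum_{j\in[a,b]} S_j T[j]$, a query recovers the \emph{exact} full-window sketch $\hat T_i = S[:,i{:}i{+}m{-}1]\,T[i,\dots,i{+}m{-}1]$ by summing $O(\log n)$ node vectors, and you apply a \emph{single} estimator against the precomputed shifted pattern sketch $\hat P_i$. This needs only one JL failure event, so no repetition or union bound is required, and you correctly resolve the alignment issue by indexing signs by absolute text position and precomputing $\hat P_i$ for every $i$ ($O(n/\eps^2)$ space, within the claimed bound). The one fact your argument relies on that is worth stating explicitly is that the $m$ consecutive columns $S[:,i{:}i{+}m{-}1]$ of the Kane–Nelson matrix are themselves distributed as a valid sparse JL matrix — true because the columns are i.i.d.\ — so the distributional JL guarantee applies to the restricted matrix. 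With that noted, your proof is correct and matches or improves the paper's stated bounds.
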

\begin{proof}
Let us first assume the input alphabet is of constant size.  We will make use of the sparse Johnson-Lindenstrauss transform by Kane and Nelson~\cite{KN:2014} defined by a random $\Theta(1/\eps^{2}) \times n$ matrix $M$ such that its entries are from $\{-1, 0, 1\}$, and each of its columns contains $s = \Theta(1/\eps)$ non-zero entries. The result of a transform, which we call a sketch, is defined to be equal to $ s^{-1/2} M \cdot x$. Kane and Nelson showed how to choose a distribution on such matrices such that, with constant probability, the square of the $L_2$ norm of the difference of the sketches of two strings gives a $(1+\eps)$-approximation of  Hamming distance. 

\begin{enumerate}[(a)]
\item During the preprocessing step we compute the sketch of $P$ and of each $m$-length substring of $T$. When an update to $P$ arrives, we update its sketch in a naive way in $O(1/\eps)$ time. When a query $i$ arrives, we can compute a $(1+\eps)$-approximation of the Hamming distance between $P$ and $T$ by computing the  $L_2$ norm of the difference of the sketches of $P$ and $T[i,\dots,i+m-1]$. Since the sketches are the vectors of length $1/\eps^2$, this can be done in $O(1/\eps^2)$ time.

\item For this model, we will need a sketch that gives $(1+\eps)$-approximation of  Hamming distance with error probability $\Theta(1/\log m)$. This can be achieved by repeating the scheme $\Theta(\log \log m)$ times. During the preprocessing, we first compute  $\Theta(\log \log m)$ sketches for each $2^k$-length substring of the pattern $P$, where $k = 1,2,\dots,\log m$. We then compute  $\Theta(\log \log m)$ sketches for each substring $T[i \cdot 2^k+1, \dots, (i+1) \cdot 2^k]$. We call such substrings of $T$ canonical. When an update $(i, \sigma)$ arrives, we need to fix the sketches of $O(\log m)$ canonical substrings (since $T_i$ belongs to $O(\log m)$ such substrings), which can be done in $O((1/\eps) \log m \log \log m)$ time.  A query $i$ can be answered in $O((1/\eps^2) \log m \log \log m)$ time: First, we partition $T[i, \dots, i+m-1]$ into $O(\log m)$ canonical substrings $S_1, \ldots, S_k$. Secondly, we compute a $(1+\eps)$-approximation of the Hamming distance between each $S_i$ and the corresponding substring of $P$ using the sketches. Finally, we sum up all approximations to obtain the answer. Since the probability to error on each pair of substrings is $\Theta(1/\log m)$, the total error probability is constant by the union bound.
\end{enumerate}

Both algorithms can be extended to work for any constant sized alphabet by expanding the input alphabet in unary. That is we replace the letter $i$ with a binary vector $0\dots010\dots 0$, where the set bit is in the $i$-th position.  
\end{proof}

\begin{corollary}
For a text $T$ of length $n \ge m$, and a pattern $P$ of length $m$, and $\eps > 1/n$, there is a randomised data structure for the \DaHD problem over polynomial-size alphabets with 
\begin{enumerate}[(a)]
\item $O((1/\eps^3) \cdot \polylog n)$ update time, $O((1/\eps^4) \cdot \polylog n)$ query time, and $O((1/\eps^4) \cdot n \; \polylog n)$ space if only updates to the pattern are allowed;
\item $O((1/\eps^4) \cdot \polylog n)$ update time, $O((1/\eps^4) \cdot \polylog n)$ query time, and $O((1/\eps^4) \cdot n \; \polylog n)$ space if only updates to the text are allowed.
\end{enumerate}
Each answer is correct with constant probability.
\end{corollary}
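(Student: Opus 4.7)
The plan is to reduce the polynomial-size alphabet case to the constant-size (in fact binary) alphabet case via Karloff's deterministic mappings from the lemma cited just before Corollary~\ref{cor:upper_HD_approx}. Specifically, fix an accuracy parameter $\eps' = \Theta(\eps)$, and let $map_1, \ldots, map_K$ with $K = \Theta((1/\eps'^2)\log^2 n)$ be the Karloff mappings guaranteed by that lemma. For every $j$ I would maintain, independently, one copy of the binary-alphabet \DaHD data structure from Theorem~\ref{th:upper_DaHD} built on $map_j(P)$ and $map_j(T)$, with each copy set to output a $(1+\eps')$-approximation. A query $i$ then queries all $K$ structures for $(1+\eps')$-approximations of the Hamming distance between $map_j(P)$ and $map_j(T[i,\dots,i+m-1])$ and combines them through Karloff's normalized average; because the combination is a convex combination of values that are each $(1+\eps')$-approximations of the quantities the average is approximating, and Karloff's average is itself a $(1+\eps')$-approximation of the true Hamming distance, the output is a $(1+\eps')^2 = (1+O(\eps))$-approximation, which after rescaling $\eps'$ is $(1+\eps)$.

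For updates, the key observation is that a single character update $(k,\sigma)$ in the pattern or text changes only one entry in each of the $K$ binary strings $map_j(P)$ or $map_j(T)$, since each $map_j$ is a function of a single symbol and can be evaluated in $O(1)$ time from its look-up table. Therefore a single update triggers exactly $K$ single-character updates on the underlying binary data structures. In case (a), Theorem~\ref{th:upper_DaHD}(a) gives $O(1/\eps')$ per binary update, so the total update time is $K \cdot O(1/\eps') = O((1/\eps^3)\polylog n)$; in case (b) the analogous bound from Theorem~\ref{th:upper_DaHD}(b) multiplied by $K$ and by the necessary error-boosting factor (see below) yields $O((1/\eps^4)\polylog n)$. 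Query time is in both cases $K \cdot O((1/\eps'^2)\polylog n) = O((1/\eps^4)\polylog n)$, and total space is $K$ times the per-copy space from Theorem~\ref{th:upper_DaHD}, giving $O((1/\eps^4) n\polylog n)$.

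The main obstacle is controlling the total failure probability. Each individual binary \DaHD answer is correct only with constant probability, but the Karloff combination needs \emph{every} one of the $K = \Theta((1/\eps^2)\log^2 n)$ per-mapping approximations to be accurate simultaneously; a naive union bound therefore fails. I would handle this by running $R = \Theta(\log K) = \Theta(\log(1/\eps) + \log\log n)$ independent repetitions of each binary structure and returning the median, driving the per-mapping failure probability below $1/(3K)$ so that a single union bound over all $KR$ sketches gives constant total failure probability. The factor $R$ is $\polylog n$ (for $\eps > 1/n$), so it is absorbed into the $\polylog n$ factors, and in case (b) also into the discrepancy between the naive $1/\eps^3$ and the stated $1/\eps^4$ updates; the space and query bounds scale analogously. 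Finally, to extend from substrings $T[i,\dots,i+m-1]$ with $n$ arbitrary rather than $n \le 2m$, I would apply exactly the block-decomposition trick used in the proof of Theorem~\ref{thm:upperbounds-exact}, which preserves the stated update/query times up to constant factors since both update modes considered here (update-only-in-pattern or update-only-in-text) restrict to a single block at a time when updating and to one block when querying.
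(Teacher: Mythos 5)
Your proposal matches the paper's proof: both reduce to the constant-alphabet \DaHD data structure of Theorem~\ref{th:upper_DaHD} via Karloff's $\Theta((1/\eps^2)\log^2 n)$ mappings, both run $\Theta(\log((1/\eps)\log n))=\polylog n$ independent repetitions per mapping to drive the per-mapping failure probability low enough for a union bound, and both obtain the $(1+\eps)^2$-approximation followed by rescaling. (One small note: Theorem~\ref{th:upper_DaHD} is already stated for general $n\ge m$, so your final block-decomposition step is not actually needed.)
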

\begin{proof}
We reduce the alphabet to binary by applying Karloff's mappings. There are $\Theta((1/\eps^2) \log^2 n)$ mappings, and to compute the Hamming distance between $P$ and $T[i,\dots,i+m-1]$ we need to compute the Hamming distance for each pair $map_j(P)$ and $map_j(T[i,\dots,i+m-1])$. To achieve constant error probability, we run $\Theta(\log ((1/\eps) \log n)) = \polylog n$ instances of the algorithm for text-only or pattern-only updates (Theorem~\ref{th:upper_DaHD}). (We note that we will achieve $(1+\eps)^2$-approximation, which is $(1+\eps')$-approximation for $\eps' = 2\eps + \eps^2$.)
\end{proof}

\section{Lower bounds}
In this section we demonstrate conditional and unconditional lower bounds for different variants of \DEM, \DIP, and \DHD. The conditional lower bounds are derived from the hardness of a well-known problem, online Boolean matrix-vector product (\OMv). Fig.~\ref{fig:reductions} summarises the reductions we use. 

\begin{figure}[h!]
\begin{tikzpicture}[node distance=2cm,xscale=1.15,yscale=0.8]
\node (A) at (5,4) {\OMv};
\node[text width=3.5cm]  (B) at (0,2) {\DHD modulo 2 (ternary alphabet)};
\node (C) at (0,0) {\DHD};
\node[text width=3.4cm] (D) at (3.5,2) {\DIP modulo 2 (binary alphabet)};
\node (E) at (6.5,2) {$(1+\eps)$-approx. \DIP};
\node (F) at (5,0) {\DIP};
\node (G) at (10,2) {\DEM};

\draw[->] (A) -- (D);
\draw[->] (A) -- (E);
\draw[->] (A) -- (9.7,2.4);
\draw[->] (B) -- (C);
\draw[->] (D) -- (F);
\draw[->] (E) -- (F);
\draw[->] (F) -- (C);
\draw[->] (1.9,2) -- (1.1,2);
\end{tikzpicture}
\caption{Reductions between \OMv and different variants of \DEM, \DIP, and \DHD.}
\label{fig:reductions}
\end{figure}
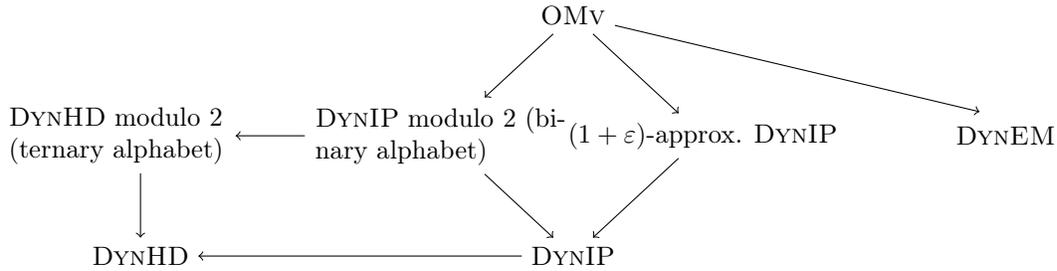	

	\subsection{Reductions between \DIP, \DHD and \DHD modulo $2$}	
	Before we get to our main lower bounds results we will first establish the relationship between some of the dynamic string problems we consider.

\begin{lemma}\label{lem:dhdfromdip}
\DHD is at least as hard as \DIP over binary alphabets.
\end{lemma}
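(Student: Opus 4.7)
The plan is to exploit the classical identity that, for binary vectors $a$ and $b$, the Hamming distance equals $|a|_1 + |b|_1 - 2 \langle a, b \rangle$, where $|\cdot|_1$ denotes the number of $1$'s and $\langle \cdot, \cdot \rangle$ is the inner product. Rearranging gives
\[
\langle a, b \rangle \;=\; \tfrac{1}{2}\bigl(|a|_1 + |b|_1 - HD(a,b)\bigr).
\]
So if we can evaluate the right-hand side after each update, we can answer \DIP queries through a \DHD oracle.

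Given a \DIP instance on binary $P$ and $T$, I would instantiate a \DHD data structure on exactly the same $P$ and $T$, and in parallel maintain two auxiliary counters: an integer storing $|P|_1$, and a Fenwick tree (or any balanced BIT) over $T$ that supports point updates and prefix-sum queries in $O(\log n)$ time. Each \DIP update $(j,\sigma)$ to the pattern is forwarded to the \DHD structure and used to adjust $|P|_1$ in $O(1)$; each \DIP update $(j,\sigma)$ to the text is forwarded to the \DHD structure and used to adjust the Fenwick tree in $O(\log n)$. On a query at position $i$, I compute $|T[i,\dots,i+m-1]|_1$ from the Fenwick tree as the difference of two prefix sums, query the \DHD structure for $HD(P, T[i,\dots,i+m-1])$, and plug both into the identity together with the stored $|P|_1$.

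The total overhead on top of the \DHD calls is $O(\log n)$ per operation, which is dominated by every upper/lower bound considered in the paper (in particular by the $\Omega(m^{1/2-\delta})$ and $\Omega((\log m/\log\log m)^2)$ targets). Consequently, an algorithm for \DHD with update time $t_u$ and query time $t_q$ yields an algorithm for binary \DIP with update time $t_u + O(\log n)$ and query time $t_q + O(\log n)$, which is exactly what the lemma claims.

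There is essentially no technical obstacle here; the only point worth checking is that the counter maintenance is strictly cheaper than the \DHD operations themselves, so that the reduction preserves the hardness up to lower-order terms. If one prefers to avoid even the $O(\log n)$ additive cost, the prefix-sum array of $T$ can instead be rebuilt inside the same lazy-rebuilding scaffold used in Theorem~\ref{th:upper}, but this refinement is not needed for the reductions that follow.
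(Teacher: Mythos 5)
Your proof is correct, but it takes a genuinely different route from the paper. You exploit the algebraic identity $\langle a,b\rangle = \tfrac12(|a|_1 + |b|_1 - HD(a,b))$ directly, running a \DHD structure on the \emph{unchanged} $P$ and $T$ while separately maintaining $|P|_1$ and a Fenwick tree over $T$ for range weight queries. The paper instead uses a pure string transformation: each $1$ in $P$ or $T$ is mapped to $111$, each $0$ in $P$ to $010$, and each $0$ in $T$ to $100$. With this encoding every aligned pair of original symbols contributes Hamming distance exactly $2$ except when two $1$s align (contributing $0$), so the inner product is read off as $(m - HD/2)$ with no auxiliary bookkeeping at all. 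The paper's version is a clean ``instance-to-instance'' (many-one style) reduction that triples the string lengths but needs no side structures; yours is a Turing-style reduction that keeps the strings as-is but carries an extra $O(\log n)$ additive cost per operation. You correctly observe that this overhead is lower order against every bound the lemma feeds into --- $\Omega(m^{1/2-\delta})$ conditionally and $\Omega((\log^{1/2} m / \log\log m)^3)$ unconditionally, both of which dominate $\log m$ --- so your reduction preserves hardness just as well. One small advantage of the paper's encoding worth noting: because it produces a genuine \DHD instance with no side state, it composes more transparently with further reductions (e.g., the chain from \OMv through \DIP modulo $2$) and avoids any need to argue that the auxiliary structure does not interfere with the cell-probe lower bound being transferred. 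Your version is fine for that purpose too, but the argument is slightly less ``black box.''
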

\begin{proof}
 We map the input alphabet of the text and the pattern separately. Take an instance of \DIP where the input alphabet is binary. In order to transform it into an instance of \DHD each $1$ in the pattern or text is mapped to the string $111$ in the \DHD instance. Similarly, a $0$ in the pattern is mapped to the string $010$ and a $0$ in the text is mapped to the string $100$.  This transformation ensures that any two symbols that align in the \DIP instance will give Hamming distance $2$ in the \DHD instance except when two $1$s align. In this case the Hamming distance will be $0$. We can therefore infer the inner product from the Hamming distance: The inner product will be equal to the length of the pattern minus the Hamming distance divided by two.
\end{proof}

We will later show both conditional and unconditional lower bounds not only for \DIP but also for \DIP modulo $2$. The following two lemmas will lead to perhaps our most surprising result which is that \DHD modulo $2$ over \emph{ternary} alphabets is exponentially harder to solve than \DHD modulo $2$ over a binary alphabet.  It is worth emphasising by way of contrast that in the standard offline pattern matching setting, the asymptotic complexity of computing the Hamming distance at all alignments of a pattern and text is identical for any constant sized input alphabet. 

\begin{lemma}\label{lem:dhdfromdip-modulo}
\DHD modulo $2$ over a ternary alphabet is at least as hard as \DIP modulo $2$ over a binary alphabet.
\end{lemma}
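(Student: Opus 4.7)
The plan is to construct two constant-length ternary encodings, one for pattern symbols and one for text symbols, so that each aligned pair of original bits contributes to the Hamming distance of the encoded strings a value whose parity matches the parity of the Boolean AND of the two bits. Summing over all aligned pairs and reducing mod 2 then gives the inner product mod 2 directly.

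Concretely, I would try length-2 ternary blocks and set, for the pattern, $\phi(0) = 00$ and $\phi(1) = 11$, and, for the text, $\psi(0) = 00$ and $\psi(1) = 12$. A four-case check then shows
\begin{itemize}
\item $(0,0)\colon$ $\mathrm{HD}(00,00) = 0$,
\item $(0,1)\colon$ $\mathrm{HD}(00,12) = 2$,
\item $(1,0)\colon$ $\mathrm{HD}(11,00) = 2$,
\item $(1,1)\colon$ $\mathrm{HD}(11,12) = 1$,
\end{itemize}
so the Hamming distance contribution has odd parity precisely when both bits are $1$, i.e.\ exactly when the inner product contribution is $1$. (A short exhaustive check rules out length-$1$ mappings, which is the only reason we need alphabet size three rather than two.)

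Given this, the reduction is essentially bookkeeping. From a \DIP modulo $2$ instance with pattern $P$ of length $m$ and text $T$ of length $n$ over $\{0,1\}$, I would build a ternary \DHD modulo $2$ instance with pattern $P' = \phi(P_1)\phi(P_2)\cdots\phi(P_m)$ of length $2m$ and text $T' = \psi(T_1)\psi(T_2)\cdots\psi(T_n)$ of length $2n$. A character update at position $j$ of $P$ (resp.\ $T$) triggers updates of the two positions $2j-1,2j$ of $P'$ (resp.\ $T'$), so each \DIP update costs $O(1)$ \DHD updates. A query at text position $i$ in the \DIP instance is answered by a single query at position $2i-1$ of $T'$ in the \DHD modulo $2$ instance; by construction the returned value is $\sum_{j=1}^{m} \mathrm{HD}(\phi(P_j),\psi(T_{i+j-1})) \bmod 2$, which by the four-case computation equals $\sum_{j=1}^{m} P_j \cdot T_{i+j-1} \bmod 2$, the desired inner product mod 2.

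The only non-routine step is discovering a valid pair of encodings, and that is handled by the small case analysis above; the remainder is straightforward verification of correctness and a constant-factor accounting of update/query costs, giving that any data structure with update time $u$ and query time $q$ for \DHD modulo $2$ over a ternary alphabet yields an $O(u)$/$O(q)$ data structure for \DIP modulo $2$ over a binary alphabet, which is exactly the claim.
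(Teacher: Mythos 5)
Your proof is correct and follows essentially the same block-encoding strategy as the paper, differing only in the choice of two-letter codes: the paper maps pattern $1\to 22$, $0\to 01$ and text $1\to 11$, $0\to 02$ (per-position Hamming contributions $1,1,1,2$, so it recovers the inner product as $(m-\mathrm{HD}) \bmod 2$), whereas your codes make the Hamming distance mod $2$ equal the inner product mod $2$ directly. Both are fine; your parenthetical that ruling out length-$1$ maps is ``the only reason'' for a ternary alphabet is slightly off since no binary encoding of any block length can work (by Lemma~\ref{lem:dhdmod2} this would contradict the easy upper bound for binary \DHD modulo $2$), but that is a side remark and does not affect the argument.
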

\begin{proof}
We again map the input alphabet of the text and pattern separately. Take an instance of \DIP modulo $2$ where the input alphabet is binary.  Each $1$ in the pattern is mapped to the string $22$ and each $0$ in the pattern is mapped to the string $01$.  Each $1$ in the text is mapped to the string $11$ and each $0$ in the text is mapped to the string $02$. This transformation ensures that any two symbols that align in the \DIP  modulo $2$ instance will give Hamming distance $1$ in the \DHD modulo $2$ instance except for when two $1$s align in the \DIP modulo $2$ instance when the resulting Hamming distance is $2$. Therefore, the inner product modulo $2$ is equal to the length of the pattern minus the Hamming distance modulo $2$.
\end{proof}

However, \DHD modulo $2$ over a binary alphabet is much easier than \DHD modulo $2$ over a ternary alphabet.

\begin{lemma}\label{lem:dhdmod2}
For a binary text $T$ of length $n \ge m$, and a binary pattern $P$ of length $m$ the \DHD modulo $2$ problem can be solved in $O(\log{m}/\log{\log{m}})$ update/query time using $O(n)$ space. There is a matching unconditional lower bound for update/query time as well.
\end{lemma}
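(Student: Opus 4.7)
The plan is to observe that for binary inputs, Hamming distance modulo 2 collapses to a parity computation: since $P_j \oplus T_{i+j-1} \equiv P_j + T_{i+j-1} \pmod 2$, we have
\[
\mathrm{HD}(P, T[i,\dots,i+m-1]) \equiv \mathrm{par}(P) \oplus \mathrm{par}(T[i,\dots,i+m-1]) \pmod 2,
\]
where $\mathrm{par}(\cdot)$ denotes the parity of the number of $1$s. Maintaining $\mathrm{par}(P)$ under single-bit pattern updates is trivial, so the whole problem reduces to supporting point updates and range-parity (range-XOR) queries of length exactly $m$ on the text $T$.

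For the upper bound I would follow the block decomposition already used in Theorem~\ref{thm:upperbounds-exact}: partition $T$ into overlapping blocks of length $2m$ with overlap $m$, so every query window lies entirely in one block and any text update affects at most two blocks. Within each block, maintain a balanced tree of fanout $B = \lceil \log^{1/2} m \rceil$ whose every internal node stores the XOR of the bits at the leaves below it. The tree has height $O(\log_B m) = O(\log m / \log\log m)$. A point update flips one leaf and propagates up the tree, one level at a time. A range-parity query for $T[i,\dots,i+m-1]$ decomposes into $O(1)$ contiguous families of sibling subtrees per level, each of size at most $B$; since each child value is a single bit, the $B$ children values fit in a single machine word, and the parity of an arbitrary prefix/suffix of those bits can be computed in $O(1)$ time by masking and applying \texttt{popcount} (or a precomputed table of size $2^B = m^{o(1)}$). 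This gives $O(\log m / \log\log m)$ worst-case time per operation, and the total space across all blocks is $O(n)$.

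For the lower bound I would reduce from the dynamic prefix-XOR (partial-sums-over-$\mathbb{Z}_2$) problem on a bit array of length $m$, which has a cell-probe lower bound of $\Omega(\log m / \log \log m)$ in the $w = \Theta(\log m)$ word-RAM model via the chronogram / \Patrascu--Demaine argument specialised to $\mathbb{Z}_2$. Given an instance $A \in \{0,1\}^m$ of prefix-XOR, set $P = 0^m$ and $T = A \cdot 0^{m-1}$, which has length $n = 2m-1 \ge m$. A point update to $A[k]$ is a single point update to $T_k$. A prefix-XOR query $\bigoplus_{j=1}^{k} A[j]$ is computed by issuing two \DHD modulo $2$ queries: the query at $i=1$ returns $\mathrm{par}(A)$, and the query at $i=k+1$ returns $\bigoplus_{j=k+1}^{m} A[j] = \mathrm{par}(A) \oplus \bigoplus_{j=1}^{k} A[j]$, whose XOR is the desired prefix parity. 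Hence an $o(\log m / \log\log m)$ data structure for \DHD modulo $2$ would contradict the prefix-XOR lower bound.

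The only delicate point is citing the correct form of the \Patrascu--Demaine bound: their proof is usually stated for partial sums over a sufficiently large group, but in the cell-probe model with $w = \Theta(\log m)$ the chronogram argument carries through for $\mathbb{Z}_2$ and yields exactly $\Omega(\log m / \log \log m)$ — matching our upper bound. Everything else is bookkeeping around the block decomposition and the standard word-parallel $B$-tree.
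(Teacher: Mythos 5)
Your proof is correct and matches the paper's approach: both observe that for binary strings the Hamming distance modulo $2$ collapses to the XOR of the parities of $P$ and the query window of $T$, both use the overlapping block decomposition, and both reduce in each direction between the resulting range-parity problem and prefix sums over $\mathbb{Z}_2$, invoking the \Patrascu--Demaine $\Theta(\log m/\log\log m)$ bounds. The only cosmetic differences are that you spell out the word-packed tree for the upper bound where the paper simply cites the partial-sums data structure, and you use a two-query reduction with $P=0^m$ for the lower bound where the paper uses a single-query reduction with $P=1^m$ and text $0^m A$.
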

\begin{proof}
As before, we divide the text $T$ into $2m$-length blocks overlapping by $m$ positions. We will show that for each block \DHD modulo $2$ can be solved in $O(\log{m}/\log{\log{m}})$ update/query time using $O(m)$ space, hence giving the claim.

Consider a $2m$-length block of $T$. In order to answer a query at alignment $i$ for \DHD modulo $2$ we need only to sum, modulo $2$, the number of $1$s in the pattern and the corresponding substring of the text $T[i,\dots,i+m-1]$.  This can be seen via a simple proof by induction as follows. As the base case consider two strings of length $1$ and let all arithmetic be over $\mathbb{Z}_2$.  In this case the Hamming distance is the sum of the Hamming weights of the two strings. For the inductive step, extend each of these two strings by one bit and observe that the new Hamming distance is the old Hamming distance before extending the strings plus the sum of the two new bits over $\mathbb{Z}_2$.   

The Hamming weight of the pattern can be maintained straightforwardly. We argue that answering queries for the Hamming weight of substrings of the block is equivalent to the prefix sum problem modulo $2$. To reduce from this problem to prefix sum we need only observe that we can compute the number of $1$s in $T[i, \dots, i+m-1]$ by subtracting the prefix sum up to index $i-1$ from the prefix sum up to index $i+m-1$. To reduce from prefix sum to the \DHD modulo $2$ problem we construct a text of length $2m$ with the first half all zeros and the second half as a copy of the prefix sum array. Setting the pattern to all $1$s we can compute the prefix sum modulo $2$ up to index $i$ of its array of length $m$ by performing a query at index $i$ of the text. It follows from the upper and lower bounds of~\cite{PD2004:Partial-sums} that the complexity of \DHD modulo $2$ over a binary alphabet is  $\Theta(\log{m}/\log{\log{m}})$. 
\end{proof}
	
	\subsection{Conditional lower bounds}
	We will now give lower bounds for our dynamic string problems conditional on the hardness of a well known problem. The \OMv problem was introduced in~\cite{HKNS:2015} as a means to prove conditional lower bounds for a number of dynamic problems. In this problem we are first given an $r \times r$ Boolean matrix $M$. We then receive $r$ vectors $v_1, \ldots, v_r$, one by one. After seeing each vector $v_i$, we have to output the product $M v_i$ (over the Boolean semi-ring) before we receive the next vector. A naive algorithm can solve this problem using $O(r^3)$ time in total with the current fastest solution taking $O(r^3/2^{\Omega(\sqrt{\log r})})$ time~\cite{LW:2017}. The \OMv conjecture is as follows: 

\begin{conjecture}[\OMv Conjecture~\cite{HKNS:2015}]\label{conj:omv_conjecture}
For any constant $\epsilon > 0$, there is no $O(r^{3-\epsilon})$-time algorithm that solves the \OMv problem with error probability of at most $1/3$.
\end{conjecture}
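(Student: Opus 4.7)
Since this final statement is introduced as a \emph{conjecture} rather than a theorem, there is no formal proof available in the ordinary sense; what I would put in place of a proof sketch is a short justification of why it is a reasonable hardness assumption to adopt, together with an explanation of why an actual proof is out of reach. This parallels how the paper already treats SETH and 3SUM in the introduction.

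The plan is to organise the justification into three short steps. First, I would recall the unconditional state of the art: the naive algorithm runs in $O(r^3)$ and the fastest known algorithm, due to Larsen and Williams~\cite{LW:2017}, only shaves a sub-polynomial factor, giving $O(r^3 / 2^{\Omega(\sqrt{\log r})})$. This already indicates that any polynomial speed-up would need a genuinely new idea rather than a sharper analysis of existing methods.

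Second, I would argue by analogy with the offline analogue. Offline Boolean matrix-vector multiplication can be batched across all $r$ queries using fast matrix multiplication or combinatorial speed-ups such as the Four Russians trick, reaching $O(r^\omega)$ or $O(r^3/\log^2 r)$. In the \emph{online} setting the requirement to output $M v_i$ before $v_{i+1}$ is revealed blocks every such batching approach, which is the intuitive source of the presumed $\Omega(r^{3-o(1)})$ hardness. I would also cite the broader framework of~\cite{HKNS:2015}, in which a wide family of dynamic problems have been shown equivalent in hardness to \OMv, so that a refutation would cascade into surprisingly fast dynamic data structures across the board.

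The main obstacle to a genuine proof is fundamental: an unconditional $\Omega(r^{3-o(1)})$ running-time lower bound in a general computational model would constitute a major breakthrough in circuit and data-structure complexity, well beyond what current cell-probe or communication-complexity techniques can reach (the strongest unconditional dynamic lower bounds are only polylogarithmic, as illustrated by the unconditional bounds this very paper derives). Consequently, rather than attempting a proof, I would present Conjecture~\ref{conj:omv_conjecture} as a working hypothesis and proceed directly to the reductions of Figure~\ref{fig:reductions}, where \OMv-hardness is transferred to \DEM, \DIP and \DHD via the constructions summarised there.
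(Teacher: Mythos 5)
You correctly recognise that this is a conjecture, not a theorem, and that the paper itself offers no proof — it is imported verbatim from~\cite{HKNS:2015} and used only as a hardness assumption in the conditional lower bounds of Theorem~\ref{thm:conditional-main} and Corollary~\ref{cor:dipmodc}. Your discussion of the current algorithmic state of the art and the reason a proof is out of reach matches the paper's own brief framing, so there is nothing to compare or correct.
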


\begin{theorem}\label{thm:conditional-main}
Assuming the \OMv conjecture, there does not exist an algorithm running in $O(m^{1/2-\epsilon})$ for the maximum of query and update time for \DEM, \DIP, and \DHD. The same lower bound holds for \DIP modulo $2$, for $(1+\eps)$-approximate \DIP, and for \DHD modulo $2$ over ternary alphabets. The same lower bound holds even when updates are permitted only in the pattern or only in the text.
\end{theorem}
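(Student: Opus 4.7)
The plan is to reduce \OMv to each of the dynamic string problems listed in the statement. Given an $r \times r$ Boolean matrix $M$ together with an online sequence $v_1, \ldots, v_r$ of query vectors, I would build a string instance with pattern length $m = \Theta(r^2)$ and, for each $v_i$, compute the vector $Mv_i$ using $O(r)$ updates followed by $O(r)$ queries on the dynamic data structure. If every operation runs in $O(m^{1/2-\epsilon}) = O(r^{1-2\epsilon})$ time, the total work across all $v_i$ is $O(r^{3-2\epsilon})$, which contradicts Conjecture~\ref{conj:omv_conjecture}.

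I would start with \DIP. In the model where only the text may be updated, fix $P$ of length $m = 2r^2$ to be the concatenation $M_1\, 0^r\, M_2\, 0^r\, \cdots\, M_r\, 0^r$, so that row $M_j$ sits at positions $(j-1)\cdot 2r + 1, \ldots, (j-1)\cdot 2r + r$. Initialise $T$ of length $\Theta(r^2)$ to be all-zero except for a length-$r$ block at a fixed position $p$, which will always hold the current query vector. Because successive rows of $M$ are separated by $r$ zero positions in $P$ and only one copy of $v$ is stored in $T$, the window at alignment $i_j := p - (j-1)\cdot 2r$ aligns $v$ exactly with $M_j$ and no other row of $M$ contributes, so $f(P, T[i_j, \ldots, i_j+m-1]) = \langle M_j, v\rangle$. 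Processing $v_i$ then requires at most $r$ text updates (to overwrite the $v$-block) followed by $r$ queries at $i_1, \ldots, i_r$. For the pattern-only-updates model I would swap the two roles: take $T = M_1\, M_2\, \cdots\, M_r$ padded with zeros and $P = v\, 0^{2r^2-r}$, with queries at $i_j = (j-1)r + 1$.

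I would then specialise this template to the remaining problems. For \DEM, I replace every $0$-bit by the wildcard symbol $?$, every $1$-bit in $P$ by a letter $a$, and every $1$-bit in $T$ by a distinct letter $b$; a four-way case check shows that mismatches at alignment $i_j$ are precisely the coordinates $k$ with $M_{jk} = v_k = 1$, so $f = 0$ iff $(Mv)_j = 0$. For $(1+\eps)$-approximate \DIP, the approximation separates $0$ from any positive integer, so it decides $(Mv)_j$ directly. For \DIP modulo $2$ over the binary alphabet I apply the standard parity trick: before processing $v_i$ I install $v_i \wedge \rho$ instead of $v_i$, for an independent uniform $\rho \in \{0,1\}^r$; the returned parity is $0$ when $(Mv_i)_j = 0$ and equals $1$ with probability at least $1/2$ otherwise. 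Repeating $\Theta(\log r)$ times per query vector and ORing the outcomes drives the per-entry error below $1/(3r^2)$, so by a union bound all $r^2$ Boolean outputs are correct with probability at least $2/3$; the extra $\log r$ factor is absorbed by a slight weakening of $\epsilon$. Lower bounds for \DHD and for \DHD modulo $2$ over a ternary alphabet then follow by composing Lemmas~\ref{lem:dhdfromdip} and~\ref{lem:dhdfromdip-modulo}, since both are constant-factor local encodings that preserve update and query complexity.

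The main obstacle I expect is the modulo $2$ reduction, where I must verify (i) that the one-sided parity test suffices to recover the Boolean product with high probability in the randomised-error regime of the \OMv conjecture, and (ii) that installing $v_i \wedge \rho$ into the string instance still costs only $O(r)$ updates per vector so that the overall budget of $O(r^{3-\epsilon'})$ is not exceeded after the logarithmic boost.
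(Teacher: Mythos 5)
Your proposal is correct and follows essentially the same reduction as the paper: encode $M$ as a fixed concatenation of rows in one string, encode the online vector $v_i$ in the other, read off each entry of $Mv_i$ via a single query at a suitably shifted alignment, derive $(1+\eps)$-approximate \DIP and \DEM by the same template, obtain \DIP modulo $2$ by masking set bits with independent coin flips and amplifying the one-sided error over $\Theta(\log r)$ repetitions, and then pass to \DHD and to \DHD modulo $2$ over a ternary alphabet via Lemmas~\ref{lem:dhdfromdip} and~\ref{lem:dhdfromdip-modulo}. The only cosmetic differences from the paper's write-up are the extra $0^r$ separators between rows (which the wildcard/zero padding already renders unnecessary) and the choice of which of $P$, $T$ holds $M$ versus $v_i$ in the default presentation, but both of these the paper also uses when swapping between text-only and pattern-only update models.
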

\begin{proof}
We first give a reduction from the online Boolean matrix-vector multiplication problem to \DEM.  We create a text $T$ of length $2m = 2r^2$ from the matrix $M$ by concatenating the $r$ rows of $M$ one after another and filling the rest of $T$ with the symbol $1$ repeated $r^2$ times. Now consider a single Boolean matrix vector product $Mv_i$. The pattern $P$ has length $m = r^2$. Its first $r$ symbols are a copy of the vector $v_i$ but with all $0$s replaced by the wildcard symbol ? and all $1$s replaced by the symbol $0$. The remaining $r^2-r$ symbols are set to the wildcard symbol ?. To perform a Boolean matrix vector multiplication we perform $m$ exact match with wildcard queries at indices $1,r+1,2r+1,\dots,(r-1)r+1$. If a query $i$ returns a match then $Mv_i[j]=0$ and $Mv_i[j]=1$ otherwise. If follows that any algorithm for \DEM running in $O(m^{1/2-\eps})$ for the maximum of query and update time implies an $O(r^{3-\eps})$-time algorithm that solves the online Boolean matrix-vector multiplication problem, thereby contradicting the OMv conjecture.

\DIP and \DHD are at least as hard as \DIP modulo $2$, so it suffices to show the lower bound for the latter. We give a similar reduction from \OMv but this time with an extra randomisation step. We create a text $T$ of length $2m = 2r^2$ from the matrix $M$ by concatenating the $r$ rows of $M$ one after another and filling the rest of $T$ with the symbol $0$ repeated $r^2$ times. Now consider a single Boolean matrix vector product $Mv_i$.  We create a pattern $P$ of length $m = r^2$ with the first $r$ symbols being a copy of $v_i$ and the remaining $r^2-r$ symbols set to $0$. We now flip each set bit in $P$ with probability $1/2$ and compute inner product modulo $2$ queries at  indices $1,r+1,2r+1,\dots,(r-1)r+1$. If $Mv_i[j] = 0$ then an inner product query $j$ will always return $0$. If $Mv_i[j]=1$ then the inner product query will return $1$ with probability $1/2$.  This gives a probability of at least $1/2$ of giving the correct answer for each $Mv_i[j]$. We amplify the probabilities by repeating the randomised procedure $O(\log m)$ times using the fact that we have one-sided error at each iteration. It then follows that there does not exist an algorithm running in $O(m^{1/2-\eps})$ for the maximum of query and update time for \DIP modulo $2$ unless the \OMv conjecture is false. 

The lower bound for $(1+\eps)$-approximate \DIP follows from the same reduction with the arithmetic performed over the reals rather than modulo $2$ and without the randomisation step. This is because a $(1+\eps)$-approximation must be able to distinguish zero and non-zero inner products which is sufficient for our reduction from OMv.    

The lower bound for \DHD modulo $2$ over a ternary alphabet now follows from Lemma~\ref{lem:dhdfromdip-modulo}. 

If updates are only allowed in the text then we derive the same lower bound as before by modifying our reductions. Let us take the reduction from the online Boolean matrix-vector multiplication problem to \DIP modulo $2$ as an example. The other lower bounds follow analogously. We create a pattern $P$ of length $m = r^2$ from the matrix $M$ by concatenating the $r$ rows of $M$ one after another. The text is of length $2m = 2r^2$ and will be all $0$s except for the substring $T[r^2-r+1,\dots, r^2]$. In order to perform a single Boolean matrix vector product $Mv_i$ the substring is updated so that $T[r^2-r+1, \dots, r^2] = v_i$ and we then flip each set bit in $T$ with probability $1/2$. We then compute inner product queries modulo $2$ at indices $1,r+1,2r+1,\dots,(r-1)r+1$ which give the correct answer for each query with probability at least $1/2$. We can amplify the probability as before giving us the desired lower bound.   
\end{proof}

Our lower bound also holds for \DIP modulo $c$ for any $c \geq 2$.

\begin{corollary}\label{cor:dipmodc}
Let integer $c \geq 2$. Assuming the \OMv conjecture, there does not exist an algorithm running in $O(m^{1/2-\epsilon})$ for the maximum of query and update time for \DIP modulo $c$.
\end{corollary}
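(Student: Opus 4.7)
The plan is to repeat the reduction used for \DIP modulo $2$ in Theorem~\ref{thm:conditional-main} essentially verbatim, and then replace its parity amplification step by an analogous argument that works modulo $c$. First I would build the text $T$ of length $2m = 2r^2$ by concatenating the rows of $M$ and padding with $0$s, and for each query vector $v_i$ form a pattern $P$ whose first $r$ symbols equal $v_i$ and whose remaining $r^2 - r$ symbols are $0$. Inner-product-modulo-$c$ queries at indices $1, r+1, \ldots, (r-1)r+1$ then recover the entries of $Mv_i$. When $(Mv_i)[j] = 0$, the integer inner product at position $jr+1$ is $0$, hence $0$ modulo $c$, just as in the binary case.

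The delicate case is $(Mv_i)[j] = 1$. Before each batch of $r$ queries I would flip each set bit of $P$ independently with probability $1/2$; if row $M_j$ and $v_i$ share $k \geq 1$ common $1$-positions, the integer inner product at position $jr+1$ becomes a $\mathrm{Binomial}(k, 1/2)$ random variable. The key technical step is a uniform constant lower bound
$$\Pr[\mathrm{Binomial}(k, 1/2) \not\equiv 0 \pmod c] \geq \delta(c) > 0 \quad \text{for all } k \geq 1.$$
I would establish this via the Fourier identity
$$\Pr[\mathrm{Binomial}(k,1/2) \equiv 0 \pmod c] \;=\; \frac{1}{c}\sum_{j=0}^{c-1} \left(\frac{1+\omega^{j}}{2}\right)^k, \qquad \omega = e^{2\pi i /c}.$$
The $j=0$ term contributes $1/c$, while each other term has modulus at most $\rho^k$ for some $\rho = \rho(c) < 1$ (since $|1+\omega^j|/2 = |\cos(\pi j/c)| < 1$ whenever $\omega^j \neq 1$, and the single case $\omega^j = -1$ contributes $0$). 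Hence for $k$ greater than some threshold $K(c)$ the probability is at most $2/c \leq 1 - 1/c$. For the finitely many $k \in \{1, \ldots, K(c)\}$ the probability is strictly less than $1$, since the value $1$ (which is nonzero modulo $c$) is attained with probability $k\cdot 2^{-k} > 0$; taking $\delta(c)$ to be the minimum of these finitely many gaps together with $1/c$ gives the desired uniform bound.

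Once this is in hand, a single query returns a nonzero answer with probability at least $\delta(c)$ whenever $(Mv_i)[j] = 1$ and returns $0$ with probability $1$ whenever $(Mv_i)[j] = 0$, so we have one-sided error. I would then amplify by running $O(\log m / \delta(c)) = O(\log m)$ independent repetitions of the flipping procedure, exactly as in the proof of Theorem~\ref{thm:conditional-main}, to recover $Mv_i$ with error probability at most $1/(3r)$ per coordinate and thus at most $1/3$ overall by a union bound. An $O(m^{1/2-\epsilon})$ algorithm for \DIP modulo $c$ would then yield an $O(r^{3-\epsilon} \polylog r)$ algorithm for \OMv, contradicting the conjecture. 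The only nonroutine step is the uniform Fourier bound above; everything else is a direct transliteration of the $c=2$ case.
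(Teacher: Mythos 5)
Your proposal follows the same reduction from \OMv, the same randomised bit-flipping, and the same one-sided-error amplification as the paper's proof, so it is essentially the same approach. The paper merely asserts that the probability of a nonzero inner product modulo $c$ tends asymptotically to $\frac{c-1}{c}$; your roots-of-unity computation together with the separate check for small $k$ correctly turns that asymptotic statement into the uniform constant lower bound $\delta(c)>0$ that the amplification step actually needs (a detail the paper glosses over), though note the intermediate inequality $2/c \le 1 - 1/c$ only holds for $c\ge 3$ and for $c=2$ one should instead read off the exact value $1/2$ directly.
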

\begin{proof}
Let the input alphabet be binary as before and perform the same randomised reduction from OMv as in the proof of Theorem~\ref{thm:conditional-main}. If the inner product equals $0$ then we always give the correct answer. If the inner product is greater than $0$ then after flipping the set bits, the inner product modulo $c$ is greater than $0$ with probability that tends asymptotically to $\frac{c-1}{c}$.  We can then amplify the probabilities to ensure that every value in the matrix-vector product is correct with constant probability as before.
\end{proof}
	
	\subsection{Unconditional lower bounds}
	In this section we will give unconditional lower bounds for all the problems we have considered except \DaHD.  Although these bounds are necessarily much lower than the conditional lower bounds we gave previously, they nonetheless match in many cases the limits of what is known unconditionally for any dynamic data structure.

We first show lower bounds for the \DIP and the \DHD problems by reduction from the dynamic weighted range counting problem. In this problem, we are given a $r \times r$ grid $D$. The points in the grid are assigned integer weights, and at any moment there can be at most $r$ non-zero weights $w_i$. For our problem $r = m^{1/3}$. Updates may change the weight of a point and a query $(i,j)$ asks for $\sum_{x\leq i, y\leq j} D_{x,y}$. In~\cite{Larsen:2012} Larsen gave an $\Omega((\log{r}/\log{\log {r}})^2)$ lower bound for the maximum of query and update time for dynamic weighted range counting. This lower bound does not hold however in the unweighted case (where the weights are in $\{0,1\}$) and giving an $\omega(\log{r})$ lower bound for this situation remained an important open problem for a number of years. Recently in~\cite{larsen:2017:crossing} a new $\Omega((\log^{1/2} r/\log{\log{r}})^3)$ lower bound was given for this unweighted range counting problem which also holds over $\mathbb{F}_2$. 

\begin{theorem}\label{thm:uc-dip}
The \DIP problem has an unconditional $\Omega((\log{m}/\log{\log {m}})^2)$ lower bound for the maximum of query and update time for polynomial-size alphabets. \DHD over binary alphabets, \DIP modulo $2$ over binary alphabets and \DHD modulo $2$ over ternary alphabets have an $\Omega((\log^{1/2} m/\log{\log{m}})^3)$ lower bound.  
\end{theorem}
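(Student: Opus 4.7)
The plan is to establish all four lower bounds through a single one-to-one reduction from dynamic 2D range counting (in both its weighted and its unweighted-mod-$2$ variants) to \DIP, and then to invoke Lemmas~\ref{lem:dhdfromdip} and~\ref{lem:dhdfromdip-modulo} to carry the \DIP lower bounds over to the two \DHD variants. One construction will serve everything; only the weights used (integer vs.\ $\{0,1\}$) and the arithmetic (over $\mathbb{Z}$ vs.\ over $\mathbb{F}_2$) differ between cases.

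First I would describe the reduction. Flatten the $r \times r$ grid to a weight vector $w$ of length $N = r^2$ by $w[xr+y] = w_{x,y}$, and for each potential query $(i,j) \in [r]^2$ define a mask $M_{i,j}$ of length $N$ by $M_{i,j}[xr+y] = [x \le i]\cdot[y \le j]$. Build the pattern $P$ of length $m$ so that its first $N$ coordinates hold $w$ and the rest are zero, and build the text $T$ as the zero-padded concatenation of all $r^2$ masks. Choosing $m$ polynomially larger than $r$ so that $n \le 2m$ (for instance $m = \Theta(r^4)$) keeps $\log m = \Theta(\log r)$. A grid update at $(x,y)$ then becomes a single update to $P[xr+y]$, and a grid query $(i,j)$ becomes a single \DIP query at the text alignment where $M_{i,j}$ begins; since $P$ is supported on its first $N$ coordinates, the returned inner product equals $\langle w, M_{i,j} \rangle = \sum_{x \le i,\, y \le j} w_{x,y}$, exactly the prefix sum we want.

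With the reduction in hand, the four bounds follow quickly. For \DIP over polynomial alphabets, use $\mathrm{poly}(r)$-sized integer weights and apply Larsen's $\Omega((\log r / \log \log r)^2)$ bound~\cite{Larsen:2012}; since $\log r = \Theta(\log m)$ this gives $\Omega((\log m / \log \log m)^2)$. For \DIP modulo $2$ over a binary alphabet, use $0/1$ weights and apply the $\mathbb{F}_2$ bound from~\cite{larsen:2017:crossing}, yielding $\Omega((\log^{1/2} m / \log \log m)^3)$. For \DHD over a binary alphabet, observe that exact \DIP over a binary alphabet is at least as hard as \DIP modulo $2$ over a binary alphabet (simply reduce the output mod $2$), so Lemma~\ref{lem:dhdfromdip} transfers the preceding bound. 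For \DHD modulo $2$ over a ternary alphabet, Lemma~\ref{lem:dhdfromdip-modulo} transfers the \DIP-mod-$2$ bound directly.

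The hard part will be ensuring that each grid operation maps to only $O(1)$ \DIP operations, since any $k$-to-one blow-up would dilute the lower bound by a factor of $k$. The disjoint-concatenation construction above achieves this at the cost of $n \le 2m$ with $m = \Theta(r^4)$, which is harmless because the quantities we care about depend only on $\log r$ up to constant factors; a tighter packing (such as the $m = \Theta(r^3)$ scaling hinted at earlier in the section) would improve constants only. The preprocessing of the static text $T$ and of the zero padding of $P$ is one-time work and can be absorbed into the initial $O(n)$ setup cost rather than the per-operation cost, so it does not interfere with the cell-probe lower bound.
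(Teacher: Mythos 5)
Your proposal is correct and follows essentially the same approach as the paper: a one-to-one reduction from dynamic 2D range counting (Larsen's weighted bound for the $(\log m/\log\log m)^2$ case, and the $\mathbb{F}_2$ bound of~\cite{larsen:2017:crossing} for the cubic one), followed by Lemmas~\ref{lem:dhdfromdip} and~\ref{lem:dhdfromdip-modulo}. The only cosmetic difference is that the paper places the $r = m^{1/3}$ weights in the text and the $r^2$ length-$r$ masks in the pattern (so $m = r^3$), whereas you swap the roles and pack the $r^2$ length-$r^2$ masks into the text (so $m = \Theta(r^4)$); both give $\log m = \Theta(\log r)$ and hence the same asymptotic lower bounds.
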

\begin{proof}
We give a reduction from dynamic range counting to \DIP. We take an instance of the problem for $r = m^{1/3}$ and create a text $T$ of length $2m$ and a pattern $P$ of length $m$. The text has all symbols set to $0$ except $T_{m-m^{1/3}+1}$,\dots, $T_{m}$ that are set to $w_1$, \dots, $w_{m^{1/3}}$ respectively. For each of the $m^{2/3}$ different possible queries to $D$, a subset of the $w_i$'s will be included in the query. We create a pattern $P$ so that $P_{j m^{1/3}+i-1} = 1$ if weight $w_i$ is included in the range for query $j$ and $P_{jm^{1/3}+i-1} =0$ otherwise. 

To perform a range counting query, we need to align the relevant substring of the pattern of length $m^{1/3}$ with $T[m-m^{1/3}+1,\dots, m]$ and perform an inner product query. Our lower bounds then follow from the lower bounds for the weighted and $\mathbb{F}_2$ versions of dynamic range counting and Lemmas~\ref{lem:dhdfromdip} and~\ref{lem:dhdfromdip-modulo}.
\end{proof}

Finally, we give lower bounds for the \DEM and the $(1+\eps)$-approximate \DIP problems by reduction from the dynamic range emptiness problem.
In this problem, the set-up is exactly like in the unweighted dynamic range counting problem above, and a query $(i,j)$ asks if $\sum_{x\leq i, y\leq j} D_{x,y} = 0$. In~\cite{AHR:1998}, Alstrup et al. showed a $\Omega(\log{r}/\log{\log{r}})$ lower bound for this problem.

\begin{theorem}
\DEM and $(1+\eps)$-approximate \DIP have unconditional $\Omega(\log{m}/\log{\log {m}})$ lower bounds for the maximum of query and update time. 
\end{theorem}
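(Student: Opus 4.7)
The plan is to reduce dynamic range emptiness on an $r \times r$ grid with $r = m^{1/3}$ to each of the two target problems, following the blueprint of Theorem~\ref{thm:uc-dip}. The key observation is that both \DEM and $(1+\eps)$-approximate \DIP can distinguish a zero answer from a positive one: a $(1+\eps)$-approximation preserves the value $0$ exactly, and an exact matching with wildcards query returns $0$ iff the alignment is a match. This is precisely what is needed to resolve a range-emptiness query.

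For $(1+\eps)$-approximate \DIP I would reuse the construction of Theorem~\ref{thm:uc-dip} essentially verbatim. Let $T$ have length $2m$ with its last $r$ symbols equal to the current weights $w_1,\dots,w_r \in \{0,1\}$ and zeros elsewhere, and partition $P$ (of length $m$) into $r^2$ consecutive blocks of length $r$, each block storing the characteristic vector of the subset of weights contained in a distinct dominance range of the grid. A weight update becomes a single-character update to $T$, and a query $(i,j)$ is answered by issuing a $(P,T)$ query at the alignment that places the appropriate block of $P$ on top of $T[m-r+1,\dots,m]$; the exact inner product at this alignment equals the number of weights lying in the queried range, so any $(1+\eps)$-approximate answer is zero iff that range is empty. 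Since $r^3 = m$, the $r^2$ length-$r$ blocks tile $P$ exactly and every query gets its own alignment. Plugging the $\Omega(\log r/\log\log r)$ lower bound of Alstrup, Husfeldt and Rauhe~\cite{AHR:1998} through this reduction with $r = m^{1/3}$ yields the claimed $\Omega(\log m/\log\log m)$ bound.

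For \DEM I would keep the same text $T$ (over $\{0,1\}$) and only modify the pattern: within the $j$-th block of $P$, place the symbol $0$ at positions corresponding to weights inside the $j$-th range, and place a wildcard $?$ everywhere else in $P$. At the alignment used above, the query returns a match (value $0$) iff every weight that falls inside the queried range is $0$, that is, iff the range is empty. Single-character updates again correspond one-to-one between the two problems, so the same $\Omega(\log m/\log\log m)$ lower bound transfers.

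There is no deep obstacle; the only things to verify are the arithmetic $r \cdot r^2 = m$ that makes the whole family of range queries fit inside a single pattern of length $m$, and the fact that both ``zero/non-zero'' distinctions are preserved in the two target problems. Both are immediate, and because each range-emptiness operation is simulated by a single operation on $(P,T)$, the lower bound on the maximum of query and update time transfers directly.
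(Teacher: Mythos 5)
Your proposal is correct and follows essentially the same route as the paper: a reduction from dynamic range emptiness on an $r\times r$ grid with $r = m^{1/3}$, placing the weights in the last $r$ text positions, encoding each of the $r^2$ dominance ranges as a length-$r$ block of the pattern (with $0$/$1$ for approximate \DIP and $0$/$?$ for \DEM), and invoking the Alstrup--Husfeldt--Rauhe bound. The only cosmetic difference is that the paper phrases the approximate-\DIP case as reusing the exact construction from Theorem~\ref{thm:uc-dip} and then only testing whether the approximate value is nonzero, which is what you do as well.
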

\begin{proof}
Consider an instance of two dimensional range emptiness on $D$ for $r = m^{1/3}$. We take an instance of this problem and create a text $T$ of length $2m$ and a pattern $P$ of length $m$. The text has all values set to $0$ except $T_{m-m^{1/3}+1}$, \dots, $T_{m}$ set to $w_1$,\dots,$w_{m^{1/3}}$ respectively. For each of the $m^{2/3}$ different possible queries to $D$ in the dynamic range emptiness problem, a subset of the $w_i$'s will be included in the query. We create a pattern $P$ so that $P_{jn^{1/3}+i-1} = 0$ if weight $w_i$ is included in the range for query $j$ and $P_{jn^{1/3}+i-1} = \; ?$ otherwise. If an exact match with wildcards query returns True then we know that all the weights in the corresponding range are 0. If it returns False then we know the range is not empty. We therefore have reduced from two dimensional range emptiness to \DEM giving an $\Omega(\log{m}/\log{\log{m}})$ lower bound for \DEM.

For the $(1+\eps)$-approximate dynamic inner product problem we must be able to distinguish an inner product of zero from all other values. We therefore use the same reduction from the proof of Theorem~\ref{thm:uc-dip} but this time only report whether the approximate inner product is greater than zero. The result of this query is sufficient to determine the answer to a range emptiness query and we therefore derive the same  $\Omega(\log{m}/\log{\log{m}})$ lower bound.
\end{proof}

\bibliography{bibliofile}

\begin{thebibliography}{10}

\bibitem{ABW:2015}
A.~Abboud, A.~Backurs, and V.~V. Williams.
\newblock Tight hardness results for {LCS} and other sequence similarity
  measures.
\newblock In {\em FOCS '15: Proc. 56\textsuperscript{th} Annual Symp.
  Foundations of Computer Science}, pages 59--78, 2015.

\bibitem{AW:2014}
A.~Abboud and V.~V. Williams.
\newblock Popular conjectures imply strong lower bounds for dynamic problems.
\newblock In {\em FOCS '14: Proc. 55\textsuperscript{th} Annual Symp.
  Foundations of Computer Science}, pages 434--443, 2014.

\bibitem{AWW:2014}
A.~Abboud, V.~V. Williams, and O.~Weimann.
\newblock Consequences of faster alignment of sequences.
\newblock In {\em ICALP '14: Proc. 41\textsuperscript{st} International
  Colloquium on Automata, Languages and Programming}, pages 39--51, 2014.

\bibitem{Abrahamson:1987}
K.~Abrahamson.
\newblock Generalized string matching.
\newblock {\em SIAM Journal on Computing}, 16(6):1039--1051, 1987.

\bibitem{ABR:2000}
S.~Alstrup, G.~S. Brodal, and T.~Rauhe.
\newblock Pattern matching in dynamic texts.
\newblock In {\em SODA '00: Proc. 11\textsuperscript{th} ACM-SIAM Symp. on
  Discrete Algorithms}, pages 819--828, 2000.

\bibitem{AHR:1998}
S.~Alstrup, T.~Husfeldt, and T.~Rauhe.
\newblock Marked ancestor problems.
\newblock In {\em FOCS '98: Proc. 39\textsuperscript{th} Annual Symp.
  Foundations of Computer Science}, pages 534--543, 1998.

\bibitem{AKLLR:2000}
A.~Amir, D.~Keselman, G.~M. Landau, M.~Lewenstein, N.~Lewenstein, and M.~Rodeh.
\newblock Text indexing and dictionary matching with one error.
\newblock {\em Journal of Algorithms}, 37(2):309--325, 2000.

\bibitem{ALLS:2007}
A.~Amir, G.~M. Landau, M.~Lewenstein, and D.~Sokol.
\newblock Dynamic text and static pattern matching.
\newblock {\em ACM Transactions on Algorithms (TALG)}, 3(2), 2007.

\bibitem{ALPS:2014}
A.~Amir, A.~Levy, E.~Porat, and B.~R. Shalom.
\newblock Dictionary matching with one gap.
\newblock In {\em CPM '14: Proc. 25\textsuperscript{nd} Annual Symp. on
  Combinatorial Pattern Matchin}, pages 11--20, 2014.

\bibitem{BI:SETH:2015}
A.~Backurs and P.~Indyk.
\newblock Edit distance cannot be computed in strongly subquadratic time
  (unless {SETH} is false).
\newblock In {\em STOC '15: Proc. 47\textsuperscript{th} Annual ACM Symp.
  Theory of Computing}, 2015.

\bibitem{backursRegular}
A.~Backurs and P.~Indyk.
\newblock Which regular expression patterns are hard to match?
\newblock In {\em FOCS '16: Proc. 57\textsuperscript{th} Annual Symp.
  Foundations of Computer Science}, pages 457--466, 2016.

\bibitem{Baeza-yates:1997a}
R.~Baeza-Yates, G.~Navarro, E.~Sutinen, and J.~Tarhio.
\newblock Indexing methods for approximate text retrieval.
\newblock Technical report, Dept. of CS, Univ. of Chile, March 1997.

\bibitem{Boytsov:2011}
L.~Boytsov.
\newblock Indexing methods for approximate dictionary searching: Comparative
  analysis.
\newblock {\em Journal of Experimental Algorithmics (JEA)},
  16:1.1:1.1--1.1:1.91, 2011.

\bibitem{Bringmann:2014}
K.~Bringmann.
\newblock Why walking the dog takes time: Frechet distance has no strongly
  subquadratic algorithms unless {SETH} fails.
\newblock In {\em FOCS '14: Proc. 55\textsuperscript{th} Annual Symp.
  Foundations of Computer Science}, pages 661--670, 2014.

\bibitem{bring:Regular}
K.~Bringmann, A.~Gr{\o}nlund, and K.~G. Larsen.
\newblock A dichotomy for regular expression membership testing.
\newblock In {\em FOCS '17: Proc. 58\textsuperscript{th} Annual Symp.
  Foundations of Computer Science}, pages 307--318, 2017.

\bibitem{Bringmann:FOCS:2015}
K.~Bringmann and M.~K{\"{u}}nnemann.
\newblock Quadratic conditional lower bounds for string problems and dynamic
  time warping.
\newblock In {\em FOCS '15: Proc. 56\textsuperscript{th} Annual Symp.
  Foundations of Computer Science}, pages 79--97, 2015.

\bibitem{Clifford:2007}
P.~Clifford and R.~Clifford.
\newblock Simple deterministic wildcard matching.
\newblock {\em Information Processing Letters}, 101(2):53--54, 2007.

\bibitem{CEPP:2011}
R.~Clifford, K.~Efremenko, B.~Porat, and E.~Porat.
\newblock A black box for online approximate pattern matching.
\newblock {\em Information and Computation}, 209(4):731--736, 2011.

\bibitem{CGL:2004}
R.~Cole, L.-A. Gottlieb, and M.~Lewenstein.
\newblock Dictionary matching and indexing with errors and don't cares.
\newblock In {\em STOC '04: Proc. 36\textsuperscript{th} Annual ACM Symp.
  Theory of Computing}, pages 91--100, 2004.

\bibitem{CH:2002}
R.~Cole and R.~Hariharan.
\newblock Verifying candidate matches in sparse and wildcard matching.
\newblock In {\em STOC '02: Proc. 34\textsuperscript{th} Annual ACM Symp.
  Theory of Computing}, pages 592--601, 2002.

\bibitem{GKKLS:2015}
P.~Gawrychowski, A.~Karczmarz, T.~Kociumaka, J.~\defaultL{\k{a}}cki, and
  P.~Sankowski.
\newblock Optimal dynamic strings.
\newblock In {\em SODA '18: Proc. of the Twenty-Ninth Annual ACM-SIAM Symposium
  on Discrete Algorithms}, 2018.

\bibitem{HKNS:2015}
M.~Henzinger, S.~Krinninger, D.~Nanongkai, and T.~Saranurak.
\newblock Unifying and strengthening hardness for dynamic problems via the
  online matrix-vector multiplication conjecture.
\newblock In {\em STOC '15: Proc. 47\textsuperscript{th} Annual ACM Symp.
  Theory of Computing}, pages 21--30, 2015.

\bibitem{HHLS:2006}
T.~N. Huynh, W.-K. Hon, T.-W. Lam, and W.-K. Sung.
\newblock Approximate string matching using compressed suffix arrays.
\newblock {\em Theoretical Computer Science}, 352(1-3):240--249, 2006.

\bibitem{KN:2014}
D.~M. Kane and J.~Nelson.
\newblock Sparser {J}ohnson-{L}indenstrauss transforms.
\newblock {\em Journal of the ACM}, 61(1):4:1--4:23, Jan. 2014.

\bibitem{Karloff:1993}
H.~Karloff.
\newblock Fast algorithms for approximately counting mismatches.
\newblock {\em Information Processing Letters}, 48(2):53--60, 1993.

\bibitem{KPP:2015}
T.~Kopelowitz, S.~Pettie, and E.~Porat.
\newblock Higher lower bounds for the {3SUM} conjecture.
\newblock {\em CoRR}, abs/1407.6756, 2015.

\bibitem{Kosaraju:1987}
S.~R. Kosaraju.
\newblock Efficient string matching.
\newblock Manuscript, 1987.

\bibitem{Larsen:2012}
K.~G. Larsen.
\newblock The cell probe complexity of dynamic range counting.
\newblock In {\em STOC '12: Proc. 44\textsuperscript{th} Annual ACM Symp.
  Theory of Computing}, pages 85--94, 2012.

\bibitem{LMNT:2015}
K.~G. Larsen, J.~I. Munro, J.~S. Nielsen, and S.~V. Thankachan.
\newblock On hardness of several string indexing problems.
\newblock {\em Theor. Comput. Sci.}, 582:74--82, 2015.

\bibitem{larsen:2017:crossing}
K.~G. Larsen, O.~Weinstein, and H.~Yu.
\newblock Crossing the logarithmic barrier for dynamic boolean data structure
  lower bounds.
\newblock {\em arXiv preprint arXiv:1703.03575}, 2017.

\bibitem{LW:2017}
K.~G. Larsen and R.~Williams.
\newblock Faster online matrix-vector multiplication.
\newblock In {\em SODA '17: Proc. 28\textsuperscript{th} Annual ACM-SIAM
  Symposium on Discrete Algorithms}, pages 2182--2189, 2017.

\bibitem{PD2004:Partial-sums}
M.~P{\v{a}}tra\c{s}cu and E.~D. Demaine.
\newblock Tight bounds for the partial-sums problem.
\newblock In {\em SODA '04: Proc. 15\textsuperscript{th} Annual ACM-SIAM
  Symposium on Discrete Algorithms}, pages 20--29, 2004.

\end{thebibliography}
\end{document}